\newtheorem{theorem}{Theorem}         	
\newaliascnt{lemma}{theorem}				
\newtheorem{lemma}[lemma]{Lemma}              	
\newaliascnt{conjecture}{theorem}			
\newaliascnt{remark}{theorem}				
\newaliascnt{corollary}{theorem}			
\newtheorem{corollary}[corollary]{Corollary}      
\newaliascnt{definition}{theorem}			
\newtheorem{definition}[definition]{Definition}    
\newaliascnt{proposition}{theorem}			
\newtheorem{proposition}[proposition]{Proposition}  
\newaliascnt{example}{theorem}			
\newtheorem{example}[example]{Example}  	
\let\orgautoref\autoref                         		
\renewcommand{\autoref}[1]{
    \def\equationautorefname{Eq.}
    \def\figureautorefname{Fig.}%
    \def\subfigureautorefname{Fig.}%
    \def\lemmaautorefname{Lemma}%
    \def\conjectureautorefname{Conjecture}%
    \def\remarkautorefname{Remark}%
    \def\propositionautorefname{Prop.}%
    \def\corollaryautorefname{Corollary}%
    \def\definitionautorefname{Def.}%
    \def\sectionautorefname{Sect.}%
    \def\subsectionautorefname{Sect.}%
    \def\subsubsectionautorefname{Section}%
    \def\exampleautorefname{Example}%
    \orgautoref{#1}%
}
\newcommand{\specificref}[2]{\hyperref[#2]{#1~\ref*{#2}}}			
\newcommand{\EE}[1]{{\textbf{\textup{\textrm{E}}}} \left( #1 \right)}        
\newcommand{\VV}[1]{{\textup{\textbf{Var}}} \left( #1 \right)}      
    \newcommand{\PP}[1]{{\textup{\textrm{Pr}}} \left( #1 \right)}                     
\newcommand{\R}{\mathbb{R}}             
\renewcommand{\epsilon}{\varepsilon}    
\definecolor{gray}{rgb}{0.5,0.5,0.5}
\definecolor{niceblue}{rgb}{.8,.85,1}
\newcommand\independent{\protect\mathpalette{\protect\independenT}{\perp}}			
\def\independenT#1#2{\mathrel{\rlap{$#1#2$}\mkern2mu{#1#2}}}
\newcommand{\set}[1]{\{#1\}}                    
\newcommand{\makeop}[2]                         
  {\ifx#2.\def\next##1{}\else\escapechar=-1     
  \def\next##1{\escapechar=92\def#2{#1}}        
  \expandafter\next\expandafter{\string#2}      
  \let\next\makeop\fi\next{#1}}                 
\def \var(#1){{\bf #1}}
\def\AddSpace#1{\ifcat#1a\ \fi#1} 
\newcommand{\silentreminder}[1]{}
\def \up(#1){[#1)}
\def \down(#1){(#1]}
\def \series(#1,#2){#1_1, \dots \; #1_{#2}}
\def \serieszero(#1,#2){#1_0, #1_1, \dots \; #1_{#2}}
\def \para(#1){{\vspace{1ex}\noindent\small\bf #1\hspace{1ex}}}
\def \myem(#1){{\vspace{1ex}\noindent\small\em #1\hspace{1ex}}}
\newcommand{\eat}[1]{}
\newcommand{\bbb}[1]{{\mathbf{#1}}}
\date{}
\begin{document}
\title{A Theory of Pricing Private Data}



\author{Chao Li$^1$ \and Daniel Yang Li$^2$ \and Gerome Miklau$^{1,3}$ \and Dan Suciu$^2$ \\
\and $^1$University of Massachusetts \\Amherst, MA, USA\\ \{chaoli, miklau\}@cs.umass.edu
\and $^2$University of Washington \\Seattle, WA, USA\\ \{dyli, suciu\}@cs.washington.edu
\and $^3${INRIA} \\ Saclay, France
}

\date{}

\maketitle 


\begin{abstract}

Personal data has value to both its owner and to institutions who would like to analyze it.  Privacy mechanisms protect the owner's data while releasing to analysts noisy versions of aggregate query results.  But such strict protections of individual's data have not yet found wide use in practice.  Instead, Internet companies, for example, commonly provide free services in return for valuable sensitive information from users, which they exploit and sometimes sell to third parties.

As the awareness of the value of the personal data increases, so has the drive to compensate the end user for her private information.  The idea of monetizing private data can improve over the narrower view of hiding private data, since it empowers individuals to control their data through financial means.

In this paper we propose a theoretical framework for assigning prices to noisy query answers, as a function of their accuracy, and for dividing the price amongst data owners who deserve compensation for their loss of privacy.  Our framework adopts and extends key principles from both differential privacy and query pricing in data markets.  We identify essential properties of the price function and micro-payments, and characterize valid solutions.

\end{abstract}


\newpage


\section{Introduction}

\label{sec:intro}

Personal data has value to both its owner and to institutions who would like to analyze it.  The interests of individuals and institutions with respect to personal data are often at odds and a rich literature on privacy-preserving data publishing techniques~\cite{DBLP:journals/csur/FungWCY10} has tried to devise technical methods for negotiating these competing interests.  Broadly construed, privacy refers to an individual's right to control how her private data will be used, and was originally phrased as an individual's right to be protected against gossip and slander~\cite{danezis:2010}.  Research on privacy-preserving data publishing has focused more narrowly on privacy as data confidentiality.  For example, in perturbation-based data privacy, the goal is to protect an individual's personal data while releasing to legitimate users the result of aggregate computations over a large population~\cite{DBLP:journals/cacm/Dwork11}.

To date, this goal has remained elusive.  One important result from that line of work is that any mechanism providing reasonable privacy must strictly limit the number of query answers that can be accurately released~\cite{DBLP:conf/pods/DinurN03}, thus imposing a strict {\em privacy budget} for any legitimate user of the data~\cite{DBLP:journals/cacm/McSherry10}.  Researchers are actively investigating formal notions of privacy and their implications for effective data analysis.  Yet, with rare exception \cite{Kifer08Privacy:}, perturbation-based privacy mechanisms have not been deployed in practice.

Instead, many Internet companies have followed a simple formula to acquire personal data.  They offer a free service, attract users who
provide their data, and then monetize the personal data by selling
it, or by selling information derived from it, to third parties.
A recent study by JPMorgan Chase~\cite{nyt}
found that each unique user is worth approximately \$4 to Facebook and \$24 to Google.

Currently, many users are willing to provide their private data in
return for access to online services.  But as individuals become more
aware of the use of their data by corporate entities, of the potential
consequences of disclosure, and of the ultimate value of their
personal data, there has been a drive to compensate them
directly~\cite{11Personal}.
In fact, startup companies are currently developing infrastructure to support this trend.  For example, \url{www.personal.com}
creates personal data vaults, each of which may contain thousands of data points about its users.  Businesses pay for this data, and the data owners are appropriately compensated.

Monetizing private data is an improvement over the narrow view of privacy as data confidentiality because it empowers individuals to control their data through financial means.  In this paper we propose a framework for assigning prices to queries in order to compensate the data owners for their loss of privacy.  Our framework borrows from, and extends, key principles from both differential privacy~\cite{DBLP:journals/cacm/Dwork11} and data markets~\cite{DBLP:conf/pods/KoutrisUBHS12,lipricing}.
There are three actors in our setting: individuals, or data {\em owners}, contribute their personal data; a {\em buyer} submits an aggregate query over many owners' data; and a {\em market maker}, trusted to answer queries on behalf of owners, charges the buyer and compensates the owners.  Our framework makes three important connections:

\medskip

{\bf Perturbation and Price} In response to a buyer's query, the
market maker computes the true query answer, adds random noise, and
returns a perturbed result.  While under differential privacy
perturbation is always necessary, here query answers could be sold
unperturbed, but the price would be high because each data owner
contributing to an aggregate query needs to be compensated.  By adding
perturbation to the query answer, the price can be lowered: the more
perturbation, the lower the price.  The buyer specifies how much
accuracy he is willing to pay for when issuing the query.  Unperturbed
query answers are very expensive, but at the other extreme, query
answers are almost free if the noise added is the same as in
differential privacy~\cite{DBLP:journals/cacm/Dwork11} with
conservative privacy parameters. The relationship between the accuracy
of a query result and its cost depends on the query and the
preferences of contributing data owners.  Formalizing this
relationship is one of the goals of this paper.

{\bf Arbitrage and Perturbation} Arbitrage is an undesirable property
of a set of priced queries that allows a buyer to obtain the answer to
a query more cheaply than its advertised price by deriving the answer
from a less expensive alternative set of queries.  As a simple
example, suppose that a given query is sold with two options for
perturbation, measured by variance: a variance of 10 for \$5 and a
variance of 1 for \$200.  A savvy buyer who seeks a variance of 1
would never pay \$200. Instead, he would purchase the first query 10
times, receive 10 noisy answers, and compute their average.  Since the
noise is added independently, the variance of the resulting average is
1, and the total cost is only \$50.  Arbitrage opportunities result
from inconsistencies in the pricing of queries which must be avoided
and perturbing query answers makes this significantly more
challenging.  Avoiding arbitrage in data markets has been considered
before only in the absence of perturbation
\cite{DBLP:journals/pvldb/BalazinskaHS11,DBLP:conf/pods/KoutrisUBHS12,lipricing}. Formalizing
arbitrage for noisy queries is a second goal of this paper.  While, in
theory, achieving arbitrage-freeness requires imposing a lower bound
on the ratio between the price of low accuracy and high accuracy
queries, we will show that it is possible to design quite flexible
arbitrage-free pricing functions.


{\bf Privacy-loss and Payments} Given a randomized mechanism for
answering a query $q$, a common measure of privacy loss to an
individual is defined by differential privacy: it is the maximum ratio
between the probability of returning some fixed output with and
without that individual's data.  Differential privacy imposes a bound
of $e^\varepsilon$ on this quantity, where $\varepsilon$ is a small
constant, presumed acceptable to all individuals in the population.
Our framework contrasts with this in several ways.  First, the privacy
loss is not limited a priori, but depends on the buyer's request.  If
the buyer asks for a query with low variance, then the privacy loss to
(at least some) individuals will be high.  These data owners must be
compensated for their privacy loss through the buyer's payment.  At an
extreme, if the query answer is exact (unperturbed), then the privacy
loss to some individuals is total, and they must be compensated
appropriately.  Also, we allow each data owner to value their privacy
loss separately, by demanding greater or lesser payments.  Formalizing
the relationship between privacy loss and payments to the data owners
is a third goal of this paper.


By charging buyers for access to private data we overcome a
fundamental limitation of perturbation-based privacy preserving
mechanisms, namely the privacy budget.  This term refers to a limit on
the quantity and/or accuracy of queries that any buyer can ask, in order to prevent an unacceptable disclosure of the data.  For example, if a differentially-private mechanism adds Laplacian noise with variance $v$, then by asking the same query $n$ times the buyer can reduce the variance to $v/n$.  Even if queries are restricted to aggregate queries, there exist sequences of queries that can reveal the private data for most individuals in the database~\cite{DBLP:conf/pods/DinurN03} and enforcing the privacy budget must prevent this.  In contrast, when private data is priced, full disclosure is possible only if the buyer pays a high price.  For example, in order to reduce the variance to $v/n$, the buyer would have to purchase the query $n$ times, thus paying $n$ times more than for a single query.  In order to perform the attacks in~\cite{DBLP:conf/pods/DinurN03} he would have to pay for (roughly) $n\log^2 n$ queries.

Thus, the burden of the market maker is no longer to guard the privacy
budget, but instead to ensure that prices are set such that, whatever
disclosure is obtained by the buyer, all contributing individuals are
properly compensated. In particular, if a sequence of queries can
indeed reveal the private data for most individuals, its price must
approach the total cost for the entire database.

The paper is organized as follows.  We describe the
basic framework for pricing private data in \autoref{sec:basic}.  In \autoref{sec:buyer}, we discuss the main required properties for price functions, developing notions of answerability for perturbed query answers and characterizing arbitrage-free price functions.  In \autoref{sec:leaks} we develop a notion of personalized privacy loss for individuals, based on differential privacy.  We define micro payment functions using this measure of privacy loss in \autoref{sec:user}.  We discuss two future challenges for pricing private data in \autoref{sec:incentives}: disclosures that could result from an individual's privacy valuations alone, and incentives for data owners to honestly reveal the valuations of their data.  We discuss related work and conclude in \autoref{sec:related} and \autoref{sec:conclusions}.


\section{Basic Concepts}

\label{sec:basic}

In this section we describe the basic architecture of the private data
pricing framework, illustrated in \autoref{fig:regular}.

\begin{figure*}[th]
	\centerline{\includegraphics[height=150pt]{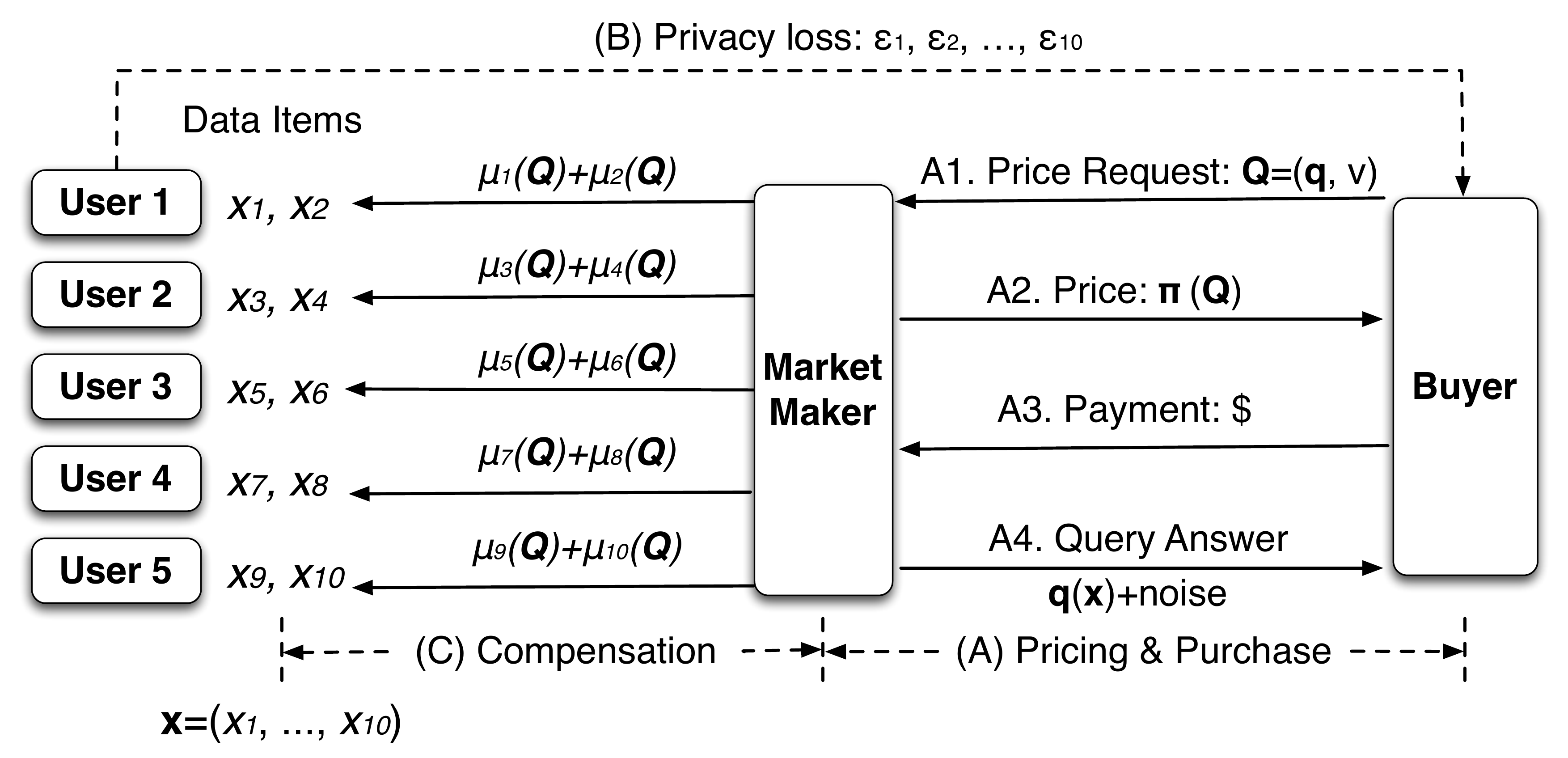}}
        \caption{\label{fig:regular} The pricing framework has three
          components: (A) Pricing and purchase: the buyer asks a query $\bbb Q = (\bbb
          q,v)$ and must pay its price, $\pi(\bbb Q)$; (B) Privacy
          loss: by answering $\bbb Q$, the market maker leaks some
          information $\varepsilon_i$ about the private data from the
          data owners to the buyer; (C) Compensation: the market maker must
          compensate each data owner for her privacy loss with
          micro-payments $\mu_i(\bbb Q)$. The pricing framework is {\em
            balanced} if the price $\pi(\bbb Q)$ is sufficient to
          cover all micro-payments $\mu_i$ and if each
          micro-payment $\mu_i$ compensates the owner for her privacy
          loss $\varepsilon_i$.}
\end{figure*}

\subsection{The Main Actors}

{\bf The Market Maker.} The market maker is trusted by the buyer and
by each of the data owners.  He collects data from the owners and
sells it in the form of queries.  When a buyer decides to purchase a
query, the market maker collects payment, computes the answer to the
query, adds noise as appropriate, returns the result to the buyer, and
finally distributes individual payments to the data owners.  The
market maker may retain a fraction of the price as profit.

{\bf The Owner and Her Data.}  Our data model is similar to that used
in \cite{Schwartz:1979:LQS:320071.320073}, where the data items are
called {\em data elements}.
\begin{definition}[Database] A database is a vector of real-valued
  data items $\bbb{x} = (x_1, x_2, \ldots, x_n)$.
\end{definition}
Each data item $x_i$ represents personal information, owned by some
individual.  In this paper we restrict the discussion to numerical
data.  For example, $x_i$ may represent an individual's rating of a
new product with a numerical value from $x_i=0$ meaning {\em poor} to
$x_i = 5$ meaning {\em excellent}; or it may represent the HIV status
of a patient in a hospital, $x_i=0$ meaning negative, and $x_i=1$
meaning positive.  Or $x_i$ may represent age, annual income, etc.
Importantly, each data item $x_i$ is owned by an individual but an
individual may own several data items. For example, if we have a table
with attributes \texttt{age}, \texttt{gender},
\texttt{marital-status}, then items $x_1, x_2, x_3$ belong to the
first individual, items $x_4, x_5, x_6$ to the second individual, etc.

\begin{sloppypar} {\bf The Buyer and His Queries.}  The buyer is a
  data analyst who wishes to compute some queries over the data.  We
  restrict our attention to the class of linear aggregation queries
  over the data items in $\bbb{x}$.
  \begin{definition}[Linear Query]
    A {\em linear query} is a real-valued vector $\bbb{q}=(q_1,q_2
    \dots q_n)$.  The answer $\bbb{q}(\bbb{x})$ to a linear query on
    $\bbb{x}$ is the vector product $\bbb{q}\bbb{x} = q_1x_1 + \dots +
    q_nx_n$. 
  \end{definition}
  Importantly, we assume that the buyer is allowed to issue multiple
  queries.  This means the buyer can combine information derived from
  multiple queries to infer answers to other queries not explicitly
  requested.  This presents a challenge we must address: to ensure
  that the buyer pays for any information that he might derive
  directly or indirectly.
\end{sloppypar}




\begin{example} \label{ex:1}
Imagine a competition between candidates $A$ and $B$ that is decided by a population of voters who each rate the competitors.  The data domain $\{0, 1, 2, 3, 4, 5\}$ represents numerical ratings.  In our data model, $x_1, x_2$ represent the rating given by Voter $1$ to candidate $A$ and $B$ respectively; $x_3, x_4$ are Voter $2$'s ratings of $A$ and $B$ respectively, and so on. The names of the voters are public, but their ratings are sensitive and should be compensated properly if used in any way.  If the buyer considers Voter $1$ and Voter $2$ experts compared with the other voters he might give a higher weight to the ratings of Voter $1$ and Voter $2$.  When a buyer wants to calculate the total rating for candidate $A$, he would issue the following linear query $\bbb{q_1} = (w_1, 0, w_1, 0, w_2, 0, w_2, 0, w_2, 0, \ldots, w_2, 0)$ with $w_1 > w_2 > 0$.
\end{example}


\subsection{Balanced Pricing Framework}

The pricing framework is {\em balanced} if (1) each data owner is
appropriately compensated whenever the answer to some query results in
some privacy loss of her data item $x_i$, and (2) the buyer is charged
sufficiently to cover all these payments.  This definition
involves three quantities: the payment $\pi$ that the buyer needs to
pay the market maker (\autoref{sec:buyer}), a measure $\varepsilon_i$
of the privacy loss of data item $x_i$ (\autoref{sec:leaks}), and a
micro-payment $\mu_i$ by which the market maker compensates the owner
of $x_i$ for this privacy loss (\autoref{sec:user}).

The buyer is allowed to specify, in addition to a linear query $\bbb
q$, an amount of noise $v$ that he is willing to tolerate in the
answer; the buyer's query is a pair $\bbb Q = (\bbb{q}, v)$, where
$\bbb{q}$ is a linear query and $v \geq 0$ represents an upper bound
on the variance. Thus, the price depends both on $\bbb q$ and $v$,
$\pi(\bbb{Q}) = \pi(\bbb{q},v) \geq 0$.  The market maker answers by
first computing the exact answer $\bbb q(\bbb x)$, then adding noise
sampled from a distribution with mean 0 and variance at most $v$.  This feature gives the buyer
more pricing options because, by increasing $v$, he can lower his
price.

Note that we define the pricing function to depend only on the
variance, and not on the type of noise used by the market maker.
However, the market participants must agree on a reasonable noise
distribution because it affects the privacy loss $\varepsilon_i$,
which further determines how much needs to be paid to the data
owners\footnote{For example, this noise $P(0)=1-2/m$, $P(\pm m) =
  1/m$, where $m=10^{64}$ (mean $0$, variance $2m$) is a poor choice.
  On one hand, it has a high variance, which implies a low price
  $\pi$.  On the other hand, it returns an accurate answer with
  extremely high probability, leading to huge privacy losses
  $\varepsilon_i$, and, consequently, to huge micro-payments.  The
  market maker will not be able to recover his costs.}.
In~\autoref{sec:leaks} we will restrict the noise to the Laplace
distribution, for which there exists an explicit formula connecting
the privacy loss $\varepsilon_i$ to the variance.

Having received the purchase price for a query $\bbb Q$, the
market-maker then distributes it to the data owners: the owner of data
item $x_i$ receives a micro payment $\mu_i(\bbb{Q}) \geq 0$.  If the
same owner contributes multiple data items $x_i, x_{i+1}, \ldots$ then
she is compensated for each.  We discuss micro-payments in \autoref{sec:user}.

Finally, the micro-payment $\mu_i(\bbb{Q})$ must compensate the data owner for
her privacy loss $\varepsilon_i$.  We say that the pricing framework defined by $\pi$, $\varepsilon_i$
and $\mu_i$ is {\em balanced} if (1) the payment received from the
buyer always covers the micro payment made to data owners, that is
$\sum_{i=1}^{n}\mu_i(\bbb Q) \leq \pi(\bbb Q)$, and (2) each
micro-payment $\mu_i$ compensates the owner of the data item $x_i$
according to the privacy loss $\varepsilon_i$, as specified by some
contract between the data owner and the market maker.  We discuss
balanced pricing frameworks and give a general procedure for designing
them in \autoref{sec:framework}.


\begin{example} \label{ex:price:perburb}
  Continuing \autoref{ex:1}, suppose that there are 1000 voters, and
  that Bob, the buyer, wants to compute the sum of ratings for
  candidate A, for which he issues the query $\bbb{q} = (1, 0, 1, 0,
  1, 0, \ldots, 1, 0)$.  Assume that each voter charges \$10 for each
  raw vote.  For an accurate answer to the query, Bob needs to pay
  $\$10,000$, which is, arguably, too expensive.  On the other hand,
  Bob could buy the query perturbed with variance $v=5,000$, which
  gives an error\footnote{$\PP{|\hat q - q| \geq 3\sqrt{2}\cdot
      \sigma} \leq 1/18 = 0.056$ (Chebyshev's inequality), where
    $\sigma = \sqrt{v} = 50\sqrt{2}$.} of $\pm 300$ with $94\%$
  confidence.  Assuming the market maker uses Laplacian noise for the
  perturbation, this query is $\varepsilon$-differentially
  private\footnote{$\varepsilon=\sqrt{2}\cdot\text{sensitivity}(\bbb{q})/\sigma=5\sqrt{2}/50\sqrt{2}=0.1$},
  with $\varepsilon=0.1$, which offers pretty good privacy to the data
  owners: each will be happy to accept only $\$0.001$ for basically no
  loss of privacy, and the buyer pays only $\$1$ for the entire query.
  The challenge is to design the prices {\em in between}.  For
  example, suppose the data owner wants to buy more accuracy, say a
  variance $v=50$ (to reduce the error to $\pm 30$), what should the
  price be now?  We will answer this in \autoref{ex:price-complete}.
  For now, let us observe that the price cannot exceed $\$100$.  If it
  did, then a savvy buyer would never pay that price, instead he would
  purchase the $\$1$ query 100 times, compute the average, and obtain
  the answer with a variance of $5000/100=50$.  This is an example of
  arbitrage and the market maker should define a pricing function that
  avoids it.
\end{example}


\section{Pricing Queries} \label{sec:buyer}

In this section we describe the first component of the framework in
\autoref{fig:regular}: the pricing function $\pi(\bbb Q) = \pi(\bbb q,
v)$.  We denote $\R^+= [0,\infty)$ and $\bar \R^+ = \R^+ \cup
\set{\infty}$.
\begin{definition} \label{def:price}
  A price function is $\pi: \R^n \times \bar \R^+\rightarrow \bar
  \R^+$.
\end{definition}
In our framework, the buyer is allowed to issue multiple queries.  As
a consequence, an important concern is that the buyer may combine
answers from multiple queries and derive an answer to a new query,
without paying the full price for the latter, a situation we call {\em
  arbitrage}.  A reasonable pricing function must guarantee that no
arbitrage is possible, in which case we call it {\em arbitrage-free}.
Such a pricing function ensures that the market maker receives proper
payment for each query by removing any incentive for the buyer to
``game'' the system by asking a set of cheaper queries in order to
obtain the desired answer.  In this section we formally define
arbitrage-free pricing functions, study their properties, and describe
a general framework for constructing arbitrage-free pricing functions,
which we will later reuse in \autoref{sec:user} to define
micro-payments, and obtain a balanced pricing framework.


\subsection{Queries and Answers}

The market maker uses a randomized mechanism for answering queries.  Given a
buyer's query $\bbb Q = (\bbb{q},v)$, the mechanism defines a random
function $\mathcal{K}_{\bbb{Q}}(\bbb x)$, such that, for any $\bbb
x$, $\EE{\mathcal{K}_{\bbb{Q}}(\bbb x)} = \bbb q(\bbb x)$ and
$\VV{\mathcal{K}_{\bbb{Q}}(\bbb x)} \leq v$.  The market maker samples one
value from this distribution and returns it to the buyer in exchange for payment $\pi(\bbb Q)$.  We abbreviate $\mathcal{K}_{\bbb{Q}}$
with $\mathcal{K}$ when $\bbb Q$ is clear from the context.
\begin{definition} \label{def:query:answer} We say that a randomized
  algorithm $\mathcal{K}(\bbb x)$ {\em answers the query $\bbb Q = (\bbb q,
    v)$ on the database $\bbb x$} if its expectation is $\bbb q(\bbb
  x)$ and its variance is less than or equal to $v$.
\end{definition}

For now, we do not impose any restrictions on the type of perturbation
used in answering the query.  The contract between the buyer and the
market maker refers only to the variance: the buyer pays for a certain
variance, and the market maker must answer with at most that
variance. The inherent assumption is that the buyer only cares about
the variance and is indifferent to other properties of the
perturbation. However, the choice of noise also affects the privacy
loss, which further affects the micro-payments: for that reason, later
in the paper (\autoref{sec:leaks}) we will restrict the perturbation
to consists of a Laplacian noise.

%

%

We assume that the market maker is stateless: he does not keep a log
of previous users, their queries, or of released answers.  As a
consequence, each query is answered using an independent random
variable.  If the same buyer issues the same query repeatedly, the
market maker answers using independent samples from the random
variable $\mathcal K$.  Of course, the buyer would have to pay for
each query separately.

\subsection{Answerability and Determinacy}

Before investigating arbitrage we establish the key concept of query
answerability.  This notion is well studied for deterministic queries
and
views~\cite{DBLP:journals/vldb/Halevy01,DBLP:journals/tods/NashSV10},
but, in our setting, the queries are random variables, and it requires
a precise definition.  Our definition below directly extends the
traditional definition from deterministic to randomized queries.


\begin{definition}[Answerability] \label{def:answerability} A query $\bbb
  Q$ is {\em answerable} from a multi-set of queries $\bbb S\!=\!
  \set{\bbb{Q_1}, \ldots, \bbb{Q_k}}$ if there exists a function $f :
  \R^k \rightarrow \R$ such that, for any mechanisms $\mathcal{K}_1$,
  $\ldots$, $\mathcal{K}_k$, that answer the queries $\bbb{Q_1},
  \ldots, \bbb{Q_k}$, the composite mechanism $f(\mathcal{K}_1, \ldots,
  \mathcal{K}_k)$ answers the query $\bbb Q$.

  We say that $\bbb Q$ is {\em linearly answerable} from $\bbb Q_1, \ldots, \bbb Q_k$ if the function $f$ is linear.
\end{definition}

For a simple example, consider queries $\bbb{Q_1}=(\bbb{q_1},v_1)$ and
$\bbb{Q_2}=(\bbb{q_2},v_2)$ and mechanisms $\mathcal{K}_1$ and $\mathcal{K}_2$ that answer them.  The query $\bbb{Q_3}=((\bbb{q_1}+\bbb{q_2})/2, (v_1+v_2)/4)$
is answerable from $\bbb{Q_1}$ and $\bbb{Q_2}$ because we can
simply sum and scale the answers returned by the two mechanisms, and
$\EE{(\mathcal{K}_1+\mathcal{K}_2)/2} = (\EE{\mathcal{K}_1} +
\EE{\mathcal{K}_2})/2$, and $\VV{(\mathcal{K}_1+\mathcal{K}_2)/2} =
(\VV{\mathcal{K}_1} + \VV{\mathcal{K}_2})/4$.  Since the function is
linear, we say that the query is linearly answerable.

%

How do we check if a query can be answered from a given set of
queries?  In this paper we give a partial answer, by
characterizing when a query is {\em linearly} answerable.


\begin{definition}[Determinacy] \label{def:det} The {\em determinacy relation} is a
  relation between a query $\bbb Q$ and a multi-set of queries $\bbb S = \set{\bbb Q_1,
    \ldots, \bbb Q_k}$, denoted $\bbb S
  \rightarrow \bbb Q$, and defined by the following rules:
  \begin{description}
  \item[Summation] \ \newline $\set{(\bbb q_1, v_1),\ldots,(\bbb q_k,
      v_k)}\rightarrow (\bbb q_1+\ldots+\bbb q_k, v_1+\ldots+v_k)$;
  \item[Scalar multiplication] $\forall c\in\mathbb{R}$, $(\bbb q,
    v)\rightarrow(c\bbb q, c^2v)$;
  \item[Relaxation] $(\bbb q, v)\rightarrow(\bbb q, v')$, where $v\leq
    v'$,
  \item[Transitivity] If\ $\bbb S_1\!\rightarrow\!\bbb Q_1, \ldots, \bbb
    S_1\!\rightarrow\!\bbb Q_k$ and $\set{\bbb Q_1,\ldots, \bbb Q_k}
    \rightarrow \bbb Q$, then $\bigcup_{i=1}^k \bbb S_k\rightarrow \bbb Q$.
  \end{description}
\end{definition}

The following proposition gives a characterization of linear answerability:

\begin{proposition}\label{prop:determinacy}
  Let $\bbb S = \set{(\bbb q_1, v_1), \ldots, (\bbb q_m, v_m)}$ be a
  multi-set of queries, and $\bbb Q = (\bbb q, v)$ be a query.  Then
  the following conditions are equivalent.
\begin{enumerate}
\item $\bbb Q$ is linearly answerable from $\bbb S$.
\item $\bbb S \rightarrow \bbb Q$.
\item  There exists $c_1, \ldots, c_m$ such that $c_1\bbb q_1+\ldots+c_m\bbb q_m=\bbb q$ and $c_1^2v_1+\ldots+c_m^2v_m\leq v$.
\end{enumerate}
\end{proposition}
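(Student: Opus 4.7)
The natural plan is to prove the cycle $3 \Rightarrow 2 \Rightarrow 1 \Rightarrow 3$, since each direction is tailored to exploit the structure introduced by the previous one.

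For $3 \Rightarrow 2$, I would give an explicit derivation. Assuming $c_1,\ldots,c_m$ with $\sum_i c_i\bbb q_i = \bbb q$ and $\sum_i c_i^2 v_i \leq v$, I apply \textbf{Scalar multiplication} to each $(\bbb q_i,v_i)$ to get $(c_i\bbb q_i,c_i^2 v_i)$, then \textbf{Summation} to combine them into $(\bbb q,\sum_i c_i^2 v_i)$, and finally \textbf{Relaxation} to reach $(\bbb q,v)$. \textbf{Transitivity} glues these into $\bbb S\rightarrow \bbb Q$.

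For $2 \Rightarrow 1$, I would induct on the structure of the derivation, checking that each of the four rules preserves linear answerability. \textbf{Scalar multiplication} uses $f(y)=cy$, since $\EE{c\mathcal{K}}=c\bbb q(\bbb x)$ and $\VV{c\mathcal{K}}=c^2\VV{\mathcal{K}}$. \textbf{Summation} uses $f(y_1,\ldots,y_k)=y_1+\cdots+y_k$ and relies on the independence of the market-maker's answers (stateless mechanism) so variances add. \textbf{Relaxation} uses the identity function, noting that a bound of $v$ is automatically a bound of any larger $v'$. \textbf{Transitivity} composes the linear $f$'s from sub-derivations; a composition of linear functions is linear. The variance bookkeeping in each rule matches the determinacy rule exactly, so the induction goes through cleanly.

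The hardest step is $1 \Rightarrow 3$. The plan is to let $f(y_1,\ldots,y_m)=a_0+\sum_i a_i y_i$ be the linear witness. To pin down the coefficients, I specialize the "for any mechanisms" quantifier to a convenient choice: take $\mathcal{K}_i(\bbb x)=\bbb q_i(\bbb x)+Z_i$ with independent $Z_i$ having mean $0$ and variance exactly $v_i$ (e.g. suitably scaled Laplace or Gaussian noise). This choice is legitimate because each such $\mathcal{K}_i$ answers $\bbb Q_i$. Taking expectations, the requirement that $f(\mathcal{K}_1,\ldots,\mathcal{K}_m)$ answer $\bbb Q$ on every database $\bbb x$ yields $a_0+\sum_i a_i\bbb q_i(\bbb x)=\bbb q(\bbb x)$ for all $\bbb x\in\R^n$. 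Since both sides are linear functions of $\bbb x$ (a linear query has no constant term, so $\bbb q(\mathbf 0)=0$), plugging in $\bbb x=\mathbf 0$ forces $a_0=0$, and matching the linear parts gives $\sum_i a_i\bbb q_i=\bbb q$. Because the chosen $\mathcal{K}_i$ are independent, the variance of the composite is exactly $\sum_i a_i^2 v_i$, and the variance requirement of the answerability definition forces $\sum_i a_i^2 v_i\leq v$. Setting $c_i:=a_i$ yields condition~3.

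The main obstacle is the quantifier subtlety in $1\Rightarrow 3$: the definition requires the composite to work for \emph{every} collection of valid mechanisms, and I must argue both that a specific independent choice with variance attaining $v_i$ actually exists (so that the variance constraint bites) and that the expectation equality, imposed for all $\bbb x$, uniquely determines the coefficients. Once that existence is in hand — which is standard for common noise families — the rest is bookkeeping.
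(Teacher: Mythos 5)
Your proof is correct and follows essentially the same route as the paper's: your $3\Rightarrow2$ derivation is identical to the paper's, and your rule-by-rule induction for $2\Rightarrow1$ mirrors the paper's argument for $2\Rightarrow3$. The only substantive difference is that the paper dismisses $1\Leftrightarrow3$ as immediate from the definition, whereas you supply the missing detail — instantiating independent mean-zero noise with variance exactly $v_i$ to force $a_0=0$, $\sum_i a_i\bbb q_i=\bbb q$ and $\sum_i a_i^2 v_i\le v$ — and you rightly flag that additivity of variances in the summation rule rests on the independence guaranteed by the paper's stateless market maker.
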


\begin{proof} $(1\Leftrightarrow 3)$: Follows from the definition of linear answerability.\\
$(2\Rightarrow3)$: It is clear that in the rules of the determinacy relation, summation, scalar multiplication and relaxation are special cases of $3$. For the transitivity rule, for each $i=1, \ldots, k$, let $f_i$ be a linear function such that $f_i(\bbb S_i)=\bbb q_i$ with variance no more than $v_i$. Let $f$ be a linear function such that $f(\bbb q_1,\ldots, \bbb q_k)=\bbb q$ with variance no more than $v$. Then $f_0=f(f_1(\bbb S_1),\ldots, f_k(\bbb S_k))$ is a linear function of $\bigcup_{i=1}^k \bbb S_k$ and the variance introduced is no more than $v$.\\
$(3\Rightarrow2)$: Since $(\bbb q_i, v_i)\rightarrow(c_i\bbb q_i, c_i^2v_i)$, $\{(c_1\bbb q_1, c_1^2v_1),\ldots,$\\$(c_m\bbb q_m, c_m^2v_m)\}\rightarrow(c_1\bbb q_1+\ldots+c_m\bbb q_m, c_1^2v_1+\ldots+c_m^2v_m)=(\bbb q, c_1^2v_1+\ldots+c_m^2v_m)$ and $(\bbb q, c_1^2v_1+\ldots+c_m^2v_m)\rightarrow(\bbb q, v)$, we obtain $\bbb S\rightarrow\bbb Q$.
\end{proof}

Thus, determinacy fully characterizes linear answerability.  But it cannot characterize general answerability.  Recall that we do not specify a noise distribution in the definition of a query answering mechanism.  If the query answering mechanism does not use Gaussian noise, then non-linear composition functions may play an important role in query answering.  This follows from the existence of an unbiased non-linear estimator whose variance is smaller than linear estimators~\cite{knautz1999nonlinear} when the noise distribution is not Gaussian.


In this paper we restrict our discussion to linear answerability; in
other words, we assume that the buyer will attempt to derive new
answers from existing queries only by computing linear combinations.
By \autoref{prop:determinacy}, we will use the determinacy relation
$\bbb S \rightarrow \bbb Q$ instead of linear answerability.

Deciding determinacy, $\bbb S \rightarrow \bbb Q$, can be done in
polynomial time using a quadratic program.  The program first
determines whether $\bbb q$ can be represented as a linear combination
of queries in $\bbb S$.  If the answer is yes, the quadratic program
further checks whether there is a linear combination such that the
variance of answering $\bbb q$ with variance at most $v$.

\begin{proposition}
  Verifying whether a set $\bbb S$ of $m$ queries determines a query
  $\bbb Q$ can be done in PTIME$(m,n)$.
\end{proposition}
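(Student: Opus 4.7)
The plan is to reduce the decision problem directly to a linearly-constrained convex quadratic program, using the characterization of \autoref{prop:determinacy}(3). Let $A$ be the $n \times m$ matrix whose columns are $\bbb{q}_1,\ldots,\bbb{q}_m$, let $D = \mathrm{diag}(v_1,\ldots,v_m)$, and let $c \in \R^m$ be the vector of unknown coefficients. By \autoref{prop:determinacy}, we have $\bbb S \rightarrow \bbb Q$ if and only if the optimal value of
\[
\min_{c \in \R^m} \; c^\top D c \quad \text{subject to} \quad A c = \bbb q
\]
is defined (i.e.\ the linear system is feasible) and is at most $v$.

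The first step is a feasibility check: decide whether $\bbb q \in \mathrm{col}(A)$ and, if so, compute some particular solution $c_0$ and a basis for $\ker(A)$. This is standard Gaussian elimination in time $O(mn\cdot\min(m,n))$. If infeasible, reject.

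The second step is to solve the quadratic subproblem. Assume first that all $v_i > 0$ so that $D$ is positive definite; then the problem is strictly convex and the unique minimizer is obtained from the KKT conditions $2Dc = A^\top \lambda$, $Ac = \bbb q$. Eliminating $c$ gives the linear system $(A D^{-1} A^\top)\lambda = \bbb q$, from which $c^* = D^{-1} A^\top \lambda$ and the optimal value $(c^*)^\top D c^*$ are read off. Both steps involve solving linear systems of size at most $\max(m,n)$, hence run in time polynomial in $m$ and $n$. The zero-variance case, $v_i = 0$ for $i \in Z$, is handled by first restricting to the affine subspace $\{c : Ac = \bbb q\}$, parameterizing the free coordinates, and then minimizing the resulting (still convex) quadratic in the remaining variables; alternatively one may substitute $D' = D + \delta I$ and observe that the standard treatment of singular QPs via the pseudoinverse $(AD^{-1}A^\top)^{+}$ still runs in polynomial time. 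Finally, compare the optimal value to $v$ and accept iff it is $\le v$.

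The only mildly delicate point is the zero-variance case, since $D^{-1}$ is not defined; but this is a routine linear-algebra adjustment (pseudoinverse, or splitting coordinates into the $v_i=0$ and $v_i>0$ blocks) and does not affect the polynomial running time. All arithmetic operations involved are linear-algebraic and well known to be in PTIME in the bit complexity of the input, yielding the claimed bound PTIME$(m,n)$.
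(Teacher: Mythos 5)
Your proposal is correct and follows essentially the same route as the paper: both reduce the question, via \autoref{prop:determinacy}(3), to the convex quadratic program minimizing $c_1^2v_1+\ldots+c_m^2v_m$ subject to $c_1\bbb q_1+\ldots+c_m\bbb q_m=\bbb q$, and compare the optimum with $v$. The only difference is that you spell out how to solve the equality-constrained QP (feasibility check, KKT system, the degenerate $v_i=0$ case) where the paper simply cites the polynomial-time solvability of convex quadratic programs.
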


\begin{proof}
Given a set $\bbb S=\{(\bbb q_1, v_1), \ldots, (\bbb q_m, v_m)\}$ and a query $(\bbb q, v)$, the following quadratic program outputs the minimum possible variance to answer $\bbb q$ using linear combinations of queries in $\bbb S$.
\begin{align*}
\mbox{Given: } & \bbb q, \bbb q_1, \ldots, \bbb q_m, v_1, \ldots, v_m,\\
\mbox{Minimize: }& c_1^2v_1+\ldots+c_m^2v_m,\\
\mbox{Subject to: }& c_1\bbb q_1 + \ldots +c_m\bbb q_m = \bbb q.
\end{align*}
Once the quadratic program is solved, one can compare
$c_1^2v_1+\ldots+c_m^2v_m$ with $v$. According to the
~\autoref{prop:determinacy} $\bbb S\rightarrow (\bbb q, v)$ if and
only if $c_1^2v_1+\ldots+c_m^2v_m\leq v$. Since the quadratic program
above has $m$ variables and the constraints are a linear equation on
$n$-dimensional vectors, it can be solved in PTIME$(m,
n)$~\cite{boyd2004convex}. Thus the verification process can be done
in PTIME$(m,n)$ as well.
\end{proof}

\subsection{Arbitrage-free Price Functions: Definition}

Arbitrage is possible when the answer to a query $\bbb Q$ can be obtained more cheaply than the advertised price $\pi(\bbb Q)$ from an alternative set of priced queries.  When arbitrage is possible it complicates the interface between the buyer and market maker: the buyer may need to reason carefully about his queries to achieve the lowest price, while at the same time the market maker may not achieve the revenue intended by some of his advertised prices.

\begin{definition}[Arbitrage-free]\label{def:arbitrage}
  A price function \\$\pi(\bbb Q)$ is arbitrage-free if $\forall m \ge 1$, $\set{\bbb
    Q_1,\ldots,\bbb Q_m}\rightarrow \bbb Q$ implies: $$\pi(\bbb
  Q)\leq\sum_{i=1}^m \pi(\bbb Q_i).$$
\end{definition}

\begin{example}
  Consider a query $(\bbb q, v)$ offered for price $\pi(\bbb q, v)$.
  A buyer who wishes to improve the accuracy of the query may ask the
  same query $n$ times, $(\bbb q, v)$, $(\bbb q,v)$, $\ldots$, $(\bbb
  q,v)$, at a total cost of $n\cdot \pi(\bbb q, v)$.  The buyer then
  computes the average of the query answers to get an estimated answer
  with a much lower variance, namely $v/n$.  The price function must
  ensure that the total payment collected from the buyer covers the
  cost of this lower variance, in other words $n\cdot \pi(\bbb q,v)
  \geq \pi(\bbb q,v/n)$. If $\pi$ is arbitrage free, then it is easy
  to check that this condition holds. Indeed, $\set{(\bbb q, v),
    \ldots, (\bbb q, v)} \rightarrow (n \bbb q, nv) \rightarrow (\bbb
  q, v/n)$, and arbitrage-freeness implies $\pi(\bbb q, v/n) \leq
  \pi(\bbb q, v) + \ldots + \pi(\bbb q, v) = n \cdot \pi(\bbb q,v)$.
\end{example}

We prove that any arbitrage-free pricing function satisfies the following
simple properties:


\begin{proposition} \label{prop:arbitrage:properties}
  Let $\pi$ be an arbitrage-free pricing function.  Then:
  \begin{enumerate}[(1)]
  \item The zero query is free: $\pi(\bbb 0, v) = 0$.
  \item Higher variance is cheaper: $v \leq v'$ implies $\pi(\bbb q,
    v) \geq \pi(\bbb q, v')$.
  \item The zero-variance query is the most expensive\footnote{It is
      possible that $\pi(\bbb q, 0) = \infty$.}\!: $\pi(\bbb q,0)\!\geq
    \pi(\bbb q,v)$ for all $v \geq 0$.
  \item Infinite noise is free: if $\pi$ is a continuous function,
    then $\pi(\bbb q, \infty) = 0$.
  \end{enumerate}
\end{proposition}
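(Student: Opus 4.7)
My plan is to derive each of the four properties by exhibiting an explicit determinacy chain $\bbb S \rightarrow \bbb Q$ and then invoking the arbitrage-free inequality $\pi(\bbb Q) \leq \sum_i \pi(\bbb Q_i)$. The characterization from \autoref{prop:determinacy} will be useful throughout: $\bbb S = \{(\bbb q_1, v_1), \ldots, (\bbb q_m, v_m)\} \rightarrow (\bbb q, v)$ iff there exist scalars $c_i$ with $\sum_i c_i \bbb q_i = \bbb q$ and $\sum_i c_i^2 v_i \leq v$. Each property amounts to choosing the right derivation; the four cases differ only in which rule (scalar multiplication, relaxation, summation) drives the argument.

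\textbf{Items (1)--(3).} For (2), the relaxation rule immediately gives $(\bbb q, v) \rightarrow (\bbb q, v')$ whenever $v \leq v'$, and arbitrage-freeness on the singleton $\{(\bbb q, v)\}$ yields $\pi(\bbb q, v') \leq \pi(\bbb q, v)$. Item (3) is the instance $v = 0$ of (2). For (1), from any single query $(\bbb q_0, v_0)$, scalar multiplication with $c = 0$ produces $(\bbb 0, 0)$, and relaxation then lifts this to $(\bbb 0, v)$ for any $v \geq 0$; arbitrage-freeness gives $\pi(\bbb 0, v) \leq \pi(\bbb q_0, v_0)$ for every $(\bbb q_0, v_0)$. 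Combined with the empty-sum reading of \autoref{prop:determinacy} (no scalars yields $\bbb q = \bbb 0$, arbitrary $v$), one obtains $\pi(\bbb 0, v) \leq 0$, and equality follows from $\pi \geq 0$.

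\textbf{Item (4).} Assume $\pi$ is continuous. The key derivation is to apply scalar multiplication to $(\bbb q/n, v)$ with factor $n$, producing $(\bbb q, n^2 v)$; arbitrage-freeness on the singleton $\{(\bbb q/n, v)\}$ gives $\pi(\bbb q, n^2 v) \leq \pi(\bbb q/n, v)$. Letting $n \to \infty$, the right-hand side tends to $\pi(\bbb 0, v) = 0$ by continuity in the vector argument together with (1), while the left-hand side tends to $\pi(\bbb q, \infty)$ by continuity in the variance argument. Hence $\pi(\bbb q, \infty) \leq 0$, and equality follows from non-negativity of $\pi$.

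\textbf{Main obstacle.} The most delicate step is (4): we need a single derivation that simultaneously shrinks the vector component toward $\bbb 0$ while blowing up the variance toward $\infty$, so that continuity can push both sides of the arbitrage-free inequality to the desired limits. The scaling rule $(\bbb q, v) \mapsto (c \bbb q, c^2 v)$ is exactly suited to this: read in reverse, a factor of $n$ converts $\bbb q/n$ into $\bbb q$ at the quadratic cost of $n^2 v$ in variance. A secondary subtlety appears in (1): the strict $m \geq 1$ reading of arbitrage-freeness gives only $\pi(\bbb 0, v) \leq \inf_{(\bbb q_0, v_0)} \pi(\bbb q_0, v_0)$, and one must additionally appeal to the natural empty-sum extension (or, alternatively, a normalization convention) to force this infimum to be $0$.
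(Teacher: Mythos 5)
Your proof is correct and follows essentially the same route as the paper's: relaxation for (2), the special case $v=0$ for (3), scaling $(\bbb q/c, v) \rightarrow (\bbb q, c^2 v)$ with $c \to \infty$ plus continuity for (4), and the empty-sum instance of the summation rule ($k=0$, giving $\emptyset \rightarrow (\bbb 0,0)$) followed by relaxation for (1). The subtlety you flag about the ``$\forall m \ge 1$'' phrasing versus the empty-sum reading is real, and the paper resolves it exactly as you do, by taking $k=0$ in the summation rule of the determinacy definition.
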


\begin{proof}
  For (1), we have $\emptyset \rightarrow (\bbb 0, 0)$ by
  the first rule of \autoref{def:det} (taking $k=0$, i.e.  $\bbb S =
  \emptyset$) and $(\bbb 0, 0) \rightarrow (\bbb 0, v)$ by the third
  rule; hence $\pi(\bbb 0, v) = 0$.  (2) follows from
  $(\bbb q, v) \rightarrow (\bbb q, v')$ when $v \leq v'$.  (3) follows immediately, since all variances are $v \geq 0$.  For (4), we use the second rule to derive $(1/c\cdot \bbb q,
  v) \rightarrow (\bbb q, c^2 \cdot v)$, hence $\pi(\bbb q, \infty) =
  \lim_{c \rightarrow \infty} \pi(\bbb q, c^2 \cdot v) \leq \lim_{c
    \rightarrow \infty} \pi(1/c \cdot \bbb q, v) = \pi(\bbb 0, v) = 0$.
\end{proof}

Arbitrage-free price functions have been studied before
\cite{DBLP:conf/pods/KoutrisUBHS12,lipricing}, but only in the context
of deterministic (i.e. unperturbed) query answers.  Our definition
extends those in \cite{DBLP:conf/pods/KoutrisUBHS12,lipricing} to
queries with perturbed answers.

\subsection{Arbitrage-free Price Functions: Synthesis}

Next we address the question of how to design arbitrage-free
pricing functions.  Obviously, the trivial pricing function $\pi(\bbb Q) = 0$,
for all $\bbb Q$, under which every query is free, is
arbitrage-free, but we want to design non-trivial pricing functions.
For example, it would be a mistake for the market-maker to charge a
constant price $c > 0$ for each query, i.e. $\pi(\bbb Q) = c$ for all
$\bbb Q$, because such a pricing function leads to arbitrage (this
follows from \autoref{prop:arbitrage:properties}).

We start by analyzing how an arbitrage-free price function $\pi(\bbb q, v)$
depends on the variance $v$.  By (2) of \autoref{prop:arbitrage:properties} we know that it is monotonically
decreasing in $v$, and by (4) it cannot be independent of
$v$ (unless $\pi$ is trivial).  The next proposition shows that it
cannot decrease faster than $1/v$:
%
\begin{proposition}\label{prop:vardecay}
  For any arbitrage-free price function $\pi$ and any linear query
  $\bbb q$, $\pi(\bbb q, v)=\Omega(1/v)$.
\end{proposition}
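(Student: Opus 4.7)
The plan is to exploit the fact that averaging $n$ independent answers to $(\bbb q, v)$ produces an answer with variance $v/n$, which translates into a determinacy relation via Summation and Scalar Multiplication from \autoref{def:det}. Specifically, for any $n \geq 1$, $n$ copies of $(\bbb q, v)$ determine $(n\bbb q, nv)$ by Summation, and then scaling by $1/n$ yields $(\bbb q, v/n)$. By Transitivity together with Relaxation, this multiset of $n$ copies determines $(\bbb q, v')$ for every $v' \geq v/n$.

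Next I would fix some $v_0 > 0$ with $\pi(\bbb q, v_0) > 0$; if no such $v_0$ exists for the given $\bbb q$ then $\pi(\bbb q, v) = 0$ for every $v > 0$ and the statement holds with constant $0$. For any $v \geq v_0$, set $n = \lceil v/v_0 \rceil$, so that $v/n \leq v_0$. The construction above yields $\{(\bbb q, v), \ldots, (\bbb q, v)\} \rightarrow (\bbb q, v_0)$ with $n$ copies on the left. Applying the arbitrage-freeness inequality of \autoref{def:arbitrage} gives $\pi(\bbb q, v_0) \leq n \, \pi(\bbb q, v)$, i.e.\ $\pi(\bbb q, v) \geq \pi(\bbb q, v_0)/n$. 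Using $n \leq v/v_0 + 1$ this rearranges to
$$\pi(\bbb q, v) \;\geq\; \frac{v_0\,\pi(\bbb q, v_0)}{v + v_0},$$
which is $\Omega(1/v)$ as $v \to \infty$.

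For the remaining range $0 \leq v < v_0$, property (2) of \autoref{prop:arbitrage:properties} gives $\pi(\bbb q, v) \geq \pi(\bbb q, v_0) > 0$, a positive constant, which is automatically $\Omega(1/v)$ on any region bounded away from $\infty$. Combining the two ranges gives the bound for all $v \geq 0$.

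I do not expect a substantive obstacle: the key move is simply recognizing that averaging (which is already baked into Summation composed with Scalar Multiplication, followed by Relaxation) lets us convert $n$ high-variance samples into one low-variance sample, and that each application of the determinacy rule costs exactly one extra factor of $n$ in price. The only mild subtlety is handling the degenerate case where $\pi(\bbb q, \cdot)$ is identically zero, which the statement tacitly covers since a zero price is vacuously $\Omega(1/v)$ with constant $0$.
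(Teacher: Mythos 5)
Your proof is correct and rests on the same key idea as the paper's: averaging $n$ copies of $(\bbb q, v)$ determines $(\bbb q, v/n)$, so arbitrage-freeness forces $\pi(\bbb q, v_0) \leq n\,\pi(\bbb q, v)$, which bounds the decay rate. The only differences are cosmetic --- you argue directly with an explicit constant $v_0\pi(\bbb q,v_0)$ where the paper argues by contradiction against a reference variance of $1$, and you explicitly dispatch the degenerate case $\pi(\bbb q,\cdot)\equiv 0$, which the paper leaves implicit.
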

\begin{proof}
  Suppose the contrary: there exists a linear query $\bbb q$ and a
  sequence $\{v_i\}_{i=1}^\infty$ such that
  $\lim_{i\rightarrow\infty}v_i=+\infty$ and
  $\lim_{i\rightarrow\infty}v_i\pi(\bbb q, v_i)=0$. Select $i_0$ such
  that $v_{i_0}>1$ and $v_{i_0}\pi(\bbb q, v_{i_0}) < \pi(\bbb q,
  1)/2$. Then, we can answer $\pi(\bbb q, 1)$ by asking the query
  $\pi(\bbb q, v_{i_0})$ at most $\lceil v_{i_0}\rceil$ times and
  computing the average.
  For these  $\lceil v_{i_0}\rceil$ queries we pay:
$$\lceil v_{i_0}\rceil\pi(\bbb q, v_{i_0})\leq (v_{i_0}+1)\pi(\bbb q, v_{i_0})<2v_{i_0}\pi(\bbb q, v_{i_0})<\pi(\bbb q, 1),$$
which implies that we have arbitrage, a contradiction.
\end{proof}
%

Our next step is to understand the dependency on $\bbb q$, and for
that we will assume that $\pi$ is inverse proportional to $v$, in
other words that it decreases at a rate $1/v$, which is the fastest
rate allowed by the previous proposition.  Set $\pi(\bbb q, v) =
f^2(\bbb q)/v$, for some positive function $f$ that depends only on
$\bbb q$.  We prove that $\pi$ is arbitrage-free iff $f$ is a
semi-norm.  Recall that a {\em semi-norm} is a function $f:
\mathbb{R}^n \rightarrow \mathbb{R}$ that satisfies the following
properties\footnote{Taking $c=0$ in the first property implies $f(\bbb
  0) = 0$; if the converse also holds, i.e.  $f(\bbb q) = 0$ implies
  $\bbb q = \bbb 0$, then $f$ is called a {\em norm}.  Also, recall
  that any semi-norm satisfies $f(\bbb q) \geq 0$, by the triangle
  inequality.}:
\begin{itemize}
\item For any $c \in \mathbb{R}$ and any $\mathbf{q} \in \mathbb{R}^n$, $f(c\mathbf{q}) = |c|f(\mathbf{q})$.
\item For any $\mathbf{q}_1$, $\mathbf{q}_2 \in \mathbb{R}^n$,
  $f(\mathbf{q}_1 + \mathbf{q}_2) \le f(\mathbf{q}_1) +
  f(\mathbf{q}_2)$.
\end{itemize}

%

We prove:

\begin{theorem}\label{prop:abfreetonorm}
  Let $\pi(\bbb q, v)$ be a price function s.t.
  $\pi(\bbb q,v) = f^2(\bbb q)/v$ for some function $f$.\footnote{In other
    words, $f(\bbb q)=\sqrt{\pi(\bbb q,v)v}$ is independent of $v$.}  Then
  $\pi(\bbb q, v)$ is arbitrage-free iff $f(\bbb q)$ is a semi-norm.
\end{theorem}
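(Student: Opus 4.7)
The plan is to prove both implications using the linear-answerability characterization from \autoref{prop:determinacy}. Without loss of generality I take $f(\bbb q) \geq 0$, since $f^2(\bbb q) = \pi(\bbb q, v)\cdot v \geq 0$ determines $f$ only up to sign. I also restrict attention to $v \in (0,\infty)$; the boundary cases $v=0$ and $v=\infty$ can be recovered from continuity as in \autoref{prop:arbitrage:properties}.

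For the easy direction ($f$ is a semi-norm $\Rightarrow$ $\pi$ is arbitrage-free), suppose $\set{(\bbb q_1, v_1), \ldots, (\bbb q_m, v_m)} \rightarrow (\bbb q, v)$. By \autoref{prop:determinacy}, there exist coefficients $c_1, \ldots, c_m$ with $\sum_i c_i \bbb q_i = \bbb q$ and $\sum_i c_i^2 v_i \leq v$. Applying scalar homogeneity and subadditivity of $f$ gives $f(\bbb q) \leq \sum_i |c_i|\, f(\bbb q_i)$, and the Cauchy--Schwarz inequality applied to the vectors $(|c_i|\sqrt{v_i})_i$ and $(f(\bbb q_i)/\sqrt{v_i})_i$ yields
\[
f^2(\bbb q) \leq \Bigl(\sum_i c_i^2 v_i\Bigr)\Bigl(\sum_i \frac{f^2(\bbb q_i)}{v_i}\Bigr) \leq v \cdot \sum_i \pi(\bbb q_i, v_i).
\]
Dividing both sides by $v$ gives $\pi(\bbb q, v) \leq \sum_i \pi(\bbb q_i, v_i)$, which is exactly arbitrage-freeness.

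For the hard direction ($\pi$ arbitrage-free $\Rightarrow$ $f$ is a semi-norm), I establish the two defining properties separately. For absolute homogeneity: the scalar multiplication rule gives $(\bbb q, v) \rightarrow (c\bbb q, c^2 v)$, so arbitrage-freeness yields $f^2(c\bbb q)/(c^2 v) \leq f^2(\bbb q)/v$, i.e. $f(c\bbb q) \leq |c|\, f(\bbb q)$. For $c \neq 0$, applying the same rule with scalar $1/c$ to the query $(c\bbb q, c^2 v)$ produces the reverse inequality, and the $c=0$ case follows from \autoref{prop:arbitrage:properties}(1). For subadditivity: the summation rule gives $\set{(\bbb q_1, v_1), (\bbb q_2, v_2)} \rightarrow (\bbb q_1 + \bbb q_2, v_1+v_2)$ for every $v_1, v_2 > 0$, so arbitrage-freeness gives
\[
\frac{f^2(\bbb q_1 + \bbb q_2)}{v_1 + v_2} \leq \frac{f^2(\bbb q_1)}{v_1} + \frac{f^2(\bbb q_2)}{v_2}.
\]
The key step is to minimize the right-hand side over $v_1, v_2$: a Lagrange multiplier (or AM-GM) calculation shows that for $v_1 + v_2$ fixed, the minimum is attained at $v_i \propto f(\bbb q_i)$ and equals $(f(\bbb q_1) + f(\bbb q_2))^2 / (v_1+v_2)$. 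Substituting this optimal choice and cancelling the common factor of $v_1 + v_2$ gives $f^2(\bbb q_1 + \bbb q_2) \leq (f(\bbb q_1) + f(\bbb q_2))^2$, and taking nonnegative square roots yields the triangle inequality.

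The main obstacle is the triangle inequality step in the hard direction: the summation rule naturally produces an inequality with $v_1+v_2$ as a denominator and individual $v_i$'s on the right, so one needs the right quadratic-to-linear trick (choosing $v_i$ proportional to $f(\bbb q_i)$) to extract a clean inequality on $f$ itself. Degenerate cases in which $f(\bbb q_1)=0$ or $f(\bbb q_2)=0$ require taking $v_i \to 0$ (or simply treating that summand as absent), and I will verify this limit produces no complication.
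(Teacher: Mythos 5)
Your proof is correct and follows essentially the same route as the paper's: the converse direction is the identical Cauchy--Schwarz argument via \autoref{prop:determinacy}, and your ``optimal'' choice $v_i \propto f(\bbb q_i)$ in the triangle-inequality step is exactly the paper's substitution $v_i = f(\bbb q_i)$, just arrived at by explicit minimization. Your added care about the degenerate case $f(\bbb q_i)=0$ (where that substitution would put a zero in the denominator) is a point the paper's own proof silently skips.
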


\begin{proof}
  $(\Rightarrow):$ Assuming $\pi$ is arbitrage-free, we prove that $f$
  is a semi-norm.  For $c \ne 0$, by the second rule of
  \autoref{def:det}, we have both:
\begin{align*}
(\bbb q, v)\rightarrow & (c\bbb q, c^2v) \\
(c\bbb q, c^2v)  \rightarrow & (\frac{1}{c}\times c\bbb q,
(\frac{1}{c})^2 \times c^2v) \rightarrow (\bbb q, v)
\end{align*}
Therefore both $\pi(\bbb q, v) \le \pi(c\bbb q, c^2v)$ and $\pi(\bbb
q, v) \ge \pi(c\bbb q, c^2v)$ hold, thus $\pi(\bbb q, v) = \pi(c\bbb
q, c^2v)$.  This implies that, if $c \neq 0$,
$$f(c\bbb q)=\sqrt{\pi(c\bbb q, c^2v)c^2v}=|c|\sqrt{\pi(\bbb q, v)v}= |c|f(\bbb q).$$
If $c = 0$, we also have $f(c\bbb q)=\sqrt{\pi(c\bbb q, c^2v)c^2v} = 0
= |c|f(\bbb q)$.

Next we prove that $f(\bbb q_1 + \bbb q_2) \leq f(\bbb q_1) + f(\bbb
q_2)$.  Set the variances $v_1 = f(\bbb q_1)$ and $v_2 = f(\bbb q_2)$;
then we have $f(\bbb q_1) = \pi(\bbb q_1,v_1)$ and $f(\bbb q_2) =
\pi(\bbb q_2,v_2)$.  By the first rule in \autoref{def:det} we have
$\set{(\bbb q_1,v_1), (\bbb q_2, v_2)} \rightarrow (\bbb q_1+\bbb q_2,
v_1+v_2)$, and therefore:
\begin{align*}
  \frac{f^2(\bbb q_1+\bbb q_2)}{f(\bbb q_1) + f(\bbb q_2)} = & \pi(\bbb
  q_1 + \bbb q_2, v_1+v_2) \\
  \leq & \pi(\bbb q_1, v_1) + \pi(\bbb q_2, v_2) = f(\bbb q_1) +
  f(\bbb q_2)
\end{align*}
which proves the claim.


$(\Leftarrow):$ Suppose $\pi(\bbb q,v)=f^2(\bbb q)/v$ and $f(\bbb q)$
is a semi-norm. According to~\autoref{prop:determinacy}, $\{(\bbb q_1,
v_1),\ldots,(\bbb q_m, v_m)\}\!\rightarrow\!(\bbb q, v)$ if and only
if there exists $c_1, \ldots, c_m$ such that $c_1\bbb
q_1+\ldots+c_m\bbb q_m=\bbb q$ and $c_1^2v_1+\ldots+c_m^2v_m\leq
v$. Then,
\begin{align*}
\sum_{i=1}^m\pi(\bbb q_i, v_i)&=\sum_{i=1}^m\frac{f^2(\bbb q_i)}{v_i}=\frac{(\sum_{i=1}^m\frac{f^2(\bbb q_i)}{v_i})(\sum_{i=1}^m c_i^2v_i)}{\sum_{i=1}^m c_i^2v_i}\\
&\geq \frac{(\sum_{i=1}^m |c_i|f(\bbb q_i))^2}{\sum_{i=1}^m c_i^2v_i}=\frac{(\sum_{i=1}^m f(c_i\bbb q_i))^2}{\sum_{i=1}^m c_i^2v_i}\\
&\geq \frac{f(\bbb q)^2}{v}=\pi(\bbb q, v),
\end{align*}
where the first inequality follows from the Cauchy-Schwarz inequality and the second comes from the sub-additivity of the semi-norm.
\end{proof}



As an immediate application of the theorem, let us instantiate $f$
to be one of the norms $L_2, L_\infty, L_p$, or a weighted $L_2$ norm.
This implies that the following four functions are arbitrage-free:
\begin{align}
  \pi(\bbb q, v) = & ||\bbb q||_2^2 / v = \sum_i q_i^2 / v  && \label{eq:l2} \\
  \pi(\bbb q, v) = & ||\bbb q||_\infty^2 / v = \max_i q_i^2 / v \label{eq:linfty} \\
  \pi(\bbb q, v) = & ||\bbb q||_p^2 / v = (\sum_i q_i^p)^{2/p} / v &&  p \geq 1 \label{eq:lp} \\
  \pi(\bbb q, v) = & (\sum_i w_i \cdot q_i^2) / v && w_1, \ldots, w_n  \geq 0 \label{eq:weighted:norm}
\end{align}

However, these are not the only arbitrage-free pricing functions: the
proposition below gives us a general method for synthesizing new
arbitrage-free pricing functions from existing ones.  Recall that a
function $f : (\bar \R^+)^k \rightarrow \bar \R^+$ is called {\em
  subadditive} if for any two vectors $\bbb x, \bbb y \in (\bar
\R^+)^k$, $f(\bbb x + \bbb y) \leq f(\bbb x) + f(\bbb y)$; the
function is called {\em non-decreasing} if $\bbb x \leq \bbb y$
implies $f(\bbb x) \leq f(\bbb y)$.

\begin{proposition} \label{prop:composition} Let $f :  (\bar \R^+)^k
  \rightarrow \bar \R^+$ be a subadditive, non-decreasing function.
  For any arbitrage-free price functions $\pi_1, \ldots, \pi_k$,
  the function $\pi(\bbb Q) = f(\pi_1(\bbb Q), \ldots, \pi_k(\bbb Q))$
  is also arbitrage-free.
\end{proposition}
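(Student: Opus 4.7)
The plan is a direct two-step chain: use monotonicity of $f$ to push through the arbitrage inequalities of the $\pi_i$'s, then use subadditivity of $f$ to split the outer sum over the $m$ queries. There is no clever construction needed; the argument is purely structural and should take only a few lines.

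Concretely, I would start by fixing an arbitrary multi-set $\set{\bbb Q_1,\ldots,\bbb Q_m}$ and a query $\bbb Q$ with $\set{\bbb Q_1,\ldots,\bbb Q_m}\rightarrow \bbb Q$, and aim to show $\pi(\bbb Q) \leq \sum_{j=1}^m \pi(\bbb Q_j)$. Since each $\pi_i$ is arbitrage-free, the determinacy hypothesis yields the $k$ coordinate-wise inequalities $\pi_i(\bbb Q) \leq \sum_{j=1}^m \pi_i(\bbb Q_j)$ for $i=1,\ldots,k$. Applying $f$ and invoking that $f$ is non-decreasing (componentwise, since its domain is $(\bar\R^+)^k$) gives
\[
\pi(\bbb Q) = f\bigl(\pi_1(\bbb Q),\ldots,\pi_k(\bbb Q)\bigr) \leq f\Bigl(\textstyle\sum_{j=1}^m \pi_1(\bbb Q_j),\ \ldots,\ \sum_{j=1}^m \pi_k(\bbb Q_j)\Bigr).
\]

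Next, I would observe that the argument on the right-hand side equals $\sum_{j=1}^m \bbb y_j$ in $(\bar\R^+)^k$, where $\bbb y_j = (\pi_1(\bbb Q_j),\ldots,\pi_k(\bbb Q_j))$. A straightforward induction on $m$ from the two-variable subadditivity hypothesis $f(\bbb x+\bbb y) \leq f(\bbb x) + f(\bbb y)$ gives $f(\sum_{j=1}^m \bbb y_j) \leq \sum_{j=1}^m f(\bbb y_j)$. Combining with the previous display yields
\[
\pi(\bbb Q) \leq \sum_{j=1}^m f(\bbb y_j) = \sum_{j=1}^m \pi(\bbb Q_j),
\]
which is exactly arbitrage-freeness per \autoref{def:arbitrage}.

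The only mild care needed is in handling the extended non-negative reals $\bar\R^+$: the sums and inequalities above should be interpreted with the usual conventions $a + \infty = \infty$ and $a \leq \infty$, under which subadditivity and monotonicity extend without issue, and the induction on $m$ goes through even when some of the $\pi_i(\bbb Q_j)$ are $\infty$ (both sides of the target inequality just become $\infty$). I do not anticipate a real obstacle; the proposition is essentially just the statement that the composition of a subadditive monotone aggregator with arbitrage-free functions preserves the defining inequality of \autoref{def:arbitrage}.
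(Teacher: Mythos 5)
Your proof is correct and follows essentially the same route as the paper's: apply the arbitrage-freeness of each $\pi_i$ coordinate-wise, push the resulting vector inequality through $f$ by monotonicity, and then split the sum by subadditivity (extended to $m$ terms by induction). The paper states this in three lines using the vector $\bar\pi(\bbb Q) = (\pi_1(\bbb Q),\ldots,\pi_k(\bbb Q))$; your added remarks about induction on $m$ and the conventions on $\bar\R^+$ are sound but not a different argument.
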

\begin{proof}
  For any query $\bbb Q$, let $\bar \pi(\bbb Q) = (\pi_1(\bbb Q),
  \ldots, \pi_k(\bbb Q))$.  Assume $\{(\bbb q_1, v_1),\ldots,(\bbb
  q_m, v_m)\}\rightarrow(\bbb q, v)$.  We have:
  \begin{align*}
    \bar \pi(\bbb Q) \leq & \sum_i \bar \pi(\bbb Q_i) && \mbox{because  each $\pi_j$ is arbitrage-free}\\
   f(\bar \pi(\bbb Q)) \leq & f(\sum_i \bar \pi(\bbb Q_i)) && \mbox{because $f$ is non-decreasing}\\
                       \leq & \sum_i f(\bar \pi(\bbb Q_i))  && \mbox{because $f$ is sub-additive}
  \end{align*}
\end{proof}
\autoref{prop:composition} allows us to synthesize new arbitrage-free
price function from existing arbitrage-free price functions. Below we
include some operations that satisfy the requirements
in~\autoref{prop:composition}.
\begin{corollary}\label{cor:compfunction}
If $\pi_1, \ldots, \pi_k$ are arbitrage-free price functions, then so are the following functions:
\begin{itemize}
\item{\em Linear combination:} $c_1\pi_1+\ldots+c_k\pi_k$, $c_1,
  \ldots,c_k \geq 0$.
\item{\em Maximum:} $\max(\pi_1, \ldots, \pi_k)$;
\item{\em Cut-off:} $\min(\pi_1, c)$, where $c\geq 0$;
\item{\em Power:} $\pi_1^c$ where $0<c\leq 1$;
\item{\em Logarithmic:} $\log(\pi_1+1)$;
\item{\em Geometric mean:} $\sqrt{\pi_1 \cdot \pi_2}$.
\end{itemize}
\end{corollary}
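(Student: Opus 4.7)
The plan is to invoke Proposition \ref{prop:composition} once for each item of the corollary. It suffices, in each case, to exhibit a subadditive, non-decreasing function $f : (\bar\R^+)^k \to \bar\R^+$ so that the constructed price function is precisely $f(\pi_1(\bbb Q),\ldots,\pi_k(\bbb Q))$, and then verify those two properties of $f$.

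For the first three items the verification is elementary. For the linear combination I take $f(x_1,\ldots,x_k) = \sum_i c_i x_i$, which is additive (hence trivially subadditive) and, because $c_i\geq 0$, non-decreasing. For the maximum I take $f(x_1,\ldots,x_k) = \max_i x_i$; monotonicity is immediate, and subadditivity follows from the pointwise bound $\max_i(x_i+y_i) \leq \max_i x_i + \max_i y_i$. For the cut-off I take $f(x) = \min(x,c)$; non-decreasing is clear, and subadditivity $\min(x+y,c) \leq \min(x,c) + \min(y,c)$ follows by a short case split on whether each of $x,y$ exceeds $c$.

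For the remaining three items the non-trivial content is a classical inequality applied to a function of one or two variables. For the power, $f(x) = x^c$ with $0 < c \leq 1$ is non-decreasing, and subadditivity $(x+y)^c \leq x^c + y^c$ is the standard inequality derived from concavity of $t\mapsto t^c$ on $\R^+$. For the logarithm, $f(x) = \log(x+1)$ is non-decreasing, and subadditivity follows because $(x+1)(y+1) = xy + x + y + 1 \geq x+y+1$, hence $\log(x+y+1) \leq \log((x+1)(y+1)) = \log(x+1) + \log(y+1)$. For the geometric mean, $f(x_1,x_2) = \sqrt{x_1 x_2}$ is clearly non-decreasing in each argument; the subadditivity
\[
\sqrt{x_1 x_2} + \sqrt{y_1 y_2} \;\leq\; \sqrt{(x_1+y_1)(x_2+y_2)}
\]
follows from the Cauchy--Schwarz inequality applied to the vectors $(\sqrt{x_1}, \sqrt{y_1})$ and $(\sqrt{x_2}, \sqrt{y_2})$.

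The main potential obstacle is the geometric-mean case: a direct expansion of $\sqrt{(x_1+y_1)(x_2+y_2)}$ and squaring of $\sqrt{x_1 x_2}+\sqrt{y_1 y_2}$ leads to an AM--GM bound that points in the wrong direction, so one must recognize the Cauchy--Schwarz structure to obtain the correct inequality. Beyond that, boundary values (entries equal to $0$ or $\infty$) are handled by the usual continuous extensions, and every other case reduces to a one-line verification.
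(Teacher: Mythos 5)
Your overall strategy is the same as the paper's: reduce each item to \autoref{prop:composition} by exhibiting a non-decreasing, subadditive $f$ and checking those two properties. For five of the six items your direct elementary verifications (additivity for the linear combination, the pointwise bound for the maximum, the case split for the cut-off, concavity of $t\mapsto t^c$ for the power, and $(x+1)(y+1)\geq x+y+1$ for the logarithm) are correct and, if anything, more self-contained than the paper's route, which disposes of the smooth cases via the second-derivative criterion of \autoref{lemma:concave}.

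The geometric-mean case, however, contains a genuine error: you have the inequality backwards. Subadditivity, as defined just before \autoref{prop:composition}, requires $f(\bbb x+\bbb y)\le f(\bbb x)+f(\bbb y)$, i.e.
\begin{align*}
\sqrt{(x_1+y_1)(x_2+y_2)} \;\le\; \sqrt{x_1x_2}+\sqrt{y_1y_2},
\end{align*}
whereas Cauchy--Schwarz applied to $(\sqrt{x_1},\sqrt{y_1})$ and $(\sqrt{x_2},\sqrt{y_2})$ yields exactly the reverse; what you proved is that $\sqrt{x_1x_2}$ is \emph{super}additive. The subadditivity you actually need is false: take $\bbb x=(1,0)$ and $\bbb y=(0,1)$, so $f(\bbb x)+f(\bbb y)=0$ while $f(\bbb x+\bbb y)=1$. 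This is not a repairable slip, because the geometric-mean item of the corollary is itself false. For instance, let $\pi_1(\bbb q,v)=q_1^2/v$ and $\pi_2(\bbb q,v)=q_2^2/v$, both arbitrage-free by \autoref{eq:weighted:norm}, so that $\pi=\sqrt{\pi_1\pi_2}=|q_1|\,|q_2|/v$. With $\bbb Q_1=((1,0,\ldots),1)$ and $\bbb Q_2=((0,1,0,\ldots),1)$, the summation rule gives $\set{\bbb Q_1,\bbb Q_2}\rightarrow((1,1,0,\ldots),2)$, yet $\pi(\bbb Q_1)=\pi(\bbb Q_2)=0$ while $\pi((1,1,0,\ldots),2)=1/2$, an arbitrage. (The paper's own proof does not actually cover this case either: $\partial^2\sqrt{x_1x_2}/\partial x_1\partial x_2>0$, so the hypothesis of \autoref{lemma:concave} fails for the geometric mean.)
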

\begin{proof} It is clear that all the functions above are
  monotonically increasing.  One can check directly that maximum and
  cut-off functions are sub-additive.  Sub-additivity for the rest
  follows from the following:

  \begin{lemma} \label{lemma:concave}
    Let $f : (\bar \R^+)^k \rightarrow \bar \R^+$ be a non-decreasing
    function s.t.  $f(\bbb 0) = 0$ and all second derivatives are
    continuous.  Then, if $\partial^2 f / \partial x_i \partial x_j
    \leq 0$ for all $i, j = 1,\ldots, k$, then $f$ is sub-additive.
  \end{lemma}

  \begin{proof}
    Denote $f_i= \partial f/\partial x_i$ and $f_{ij} = \partial^2
    f/ \partial x_i \partial x_j$.  We apply twice the first-order
    Taylor approximation $f(\bbb x) - f(\bbb 0) = \sum_i (\partial
    f/\partial x_i)(\bbb \xi) \cdot x_i$, once to $g(\bbb y) = f(\bbb
    x + \bbb y) - f(\bbb y)$, and the second time to $h(\bbb x) =
    \sum_j (f_j(\bbb x+\bbb \xi) - f_j(\xi)) \cdot y_j$:
    \begin{align*}
      & f(\bbb x) + f(\bbb y) - f(\bbb x + \bbb y) =  [f(\bbb x) -  f(\bbb 0)] + [f(\bbb x + \bbb y)  - f(\bbb y)]  \\
      & = g(\bbb 0) - g(\bbb y)  = - \sum_j g_j(\bbb \xi) \cdot y_j \\
      & = - \sum_j (f_j(\bbb x+\bbb \xi) - f_j(\xi)) \cdot y_j  = -
      \sum_{ij} f_{ij}(\bbb \eta + \bbb \xi) \cdot x_i \cdot y_j \geq 0
    \end{align*}
  \end{proof}

\end{proof}
%

\begin{example}
  For a simple illustration we will prove that the pricing function
  $\pi(\bbb q, v) = \max_i |q_i| / \sqrt{v}$ is arbitrage free.  Start
  from $\pi_1(\bbb q, v) = \max_i q_i^2 / v$, which is arbitrage-free
  by \autoref{eq:linfty}, then notice that $\pi = (\pi_1)^{1/2}$,
  hence $\pi$ is arbitrage-free by \autoref{cor:compfunction}.
\end{example}

\subsection{Selling the True Private Data}

\label{subsec:true:data}

While under differential privacy perturbation is always necessary, in
data markets the data being sold is usually unperturbed.  Perturbation
is only a tool to reduce the price for the buyer.  Therefore, a
reasonable pricing function $\pi(\bbb{q},v)$ needs to give a finite
price for a zero variance, and none of our simple pricing functions in
\autoref{eq:l2}-\autoref{eq:weighted:norm} have this property.

One can design arbitrage-free pricing functions that return a finite
price for the unperturbed data by using any bounded function with the
properties required by~\autoref{prop:composition}.  For example, apply
the cut-off function (~\autoref{cor:compfunction}) to any of the
pricing functions in \autoref{eq:l2}-\autoref{eq:weighted:norm}.  More
sophisticated functions are possible by using sigmoid curves, often
used as learning curves by the machine learning community. Many of
those curves are concave and monotonically increasing over
$\mathbb{R}^+$, which, by \autoref{lemma:concave}, are subadditive on
$\R^+$ when $f(0)=0$.  Thus, we can apply functions of those learning
curves that are centered at $0$ to~\autoref{prop:composition} so as to
generate smooth arbitrage-free price functions with finite maximum.
Other such functions are given by the following (the proof in the appendix):

\begin{corollary}\label{COR:FINITE}
  Given an arbitrage-free price function $\pi$, each of the following
  functions is also arbitrage-free and bounded: $\texttt{atan}(\pi)$,
  $\texttt{tanh}(\pi)$, $\pi/\sqrt{\pi^2+1}$.
\end{corollary}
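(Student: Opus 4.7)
The plan is to reduce the corollary to \autoref{prop:composition} by recognizing each of the three functions as a univariate function $g : \bar\R^+ \to \bar\R^+$ composed with the arbitrage-free price function $\pi$ (the $k=1$ case of the proposition). For this composition to yield an arbitrage-free price function, I would need to verify three ingredients for each candidate $g \in \{\texttt{atan},\texttt{tanh},x \mapsto x/\sqrt{x^2+1}\}$: $g$ is non-decreasing, $g$ is sub-additive, and $g$ is bounded.

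Boundedness is immediate from the asymptotic behavior: $\texttt{atan}(x) \to \pi/2$, $\texttt{tanh}(x) \to 1$, and $x/\sqrt{x^2+1} \to 1$ as $x\to\infty$. Monotonicity is also mechanical, since each function has a strictly positive derivative on $\R^+$: $\texttt{atan}'(x) = 1/(1+x^2)$, $\texttt{tanh}'(x) = \operatorname{sech}^2(x)$, and the derivative of $x/\sqrt{x^2+1}$ equals $(x^2+1)^{-3/2}$.

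The main substantive step is sub-additivity, and here I would invoke \autoref{lemma:concave}, specialized to $k=1$: if $g(0)=0$, $g$ is non-decreasing, and $g''(x)\le 0$ on $\R^+$, then $g$ is sub-additive on $\bar\R^+$. Each candidate satisfies $g(0)=0$ by direct evaluation, and concavity on $\R^+$ holds since $\texttt{atan}''(x) = -2x/(1+x^2)^2 \le 0$, $\texttt{tanh}''(x) = -2\operatorname{sech}^2(x)\tanh(x) \le 0$, and $\bigl(x/\sqrt{x^2+1}\bigr)'' = -3x(x^2+1)^{-5/2} \le 0$. With sub-additivity and monotonicity in hand, \autoref{prop:composition} yields that $g(\pi)$ is arbitrage-free, completing the proof.

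I do not anticipate a genuine obstacle: the whole argument is bookkeeping against two results already in the paper. The only place one might be tempted to slip is in confirming the hypotheses of \autoref{lemma:concave} at the boundary (behavior at $x=0$ and at $x=\infty$), but since each $g$ extends continuously to $\bar\R^+$ with $g(0)=0$ and $g$ is defined and smooth on $(0,\infty)$, the lemma applies without modification. Consequently, these three smooth, bounded, concave envelopes provide a convenient library of capped arbitrage-free pricing functions, as desired for the setting of \autoref{subsec:true:data}.
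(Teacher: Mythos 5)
Your proposal is correct and follows essentially the same route as the paper's own proof: the appendix argument likewise reduces to \autoref{lemma:concave} (via \autoref{prop:composition}) by checking that each of the three functions has nonnegative first derivative and nonpositive second derivative on $\R^+$, computing exactly the derivatives you list. Your explicit verification of boundedness and of $g(0)=0$ is a small additional bit of bookkeeping that the paper leaves implicit.
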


\begin{example} \label{ex:price-complete} Suppose we want to charge a
  price $p$ for the true, unperturbed result of a query $\bbb q$.
  Assume $||\bbb q||_2^2 = n$, and let $\pi_1(\bbb{q}, v) = ||\bbb
  q||_2^2 / v = n/v$ be the pricing function in \autoref{eq:l2}.  It
  follows that the function\footnote{We use $\Pi$ for the constant
    {\em pi} to avoid confusion with the pricing function $\pi$.}
  \begin{align*}
    \pi(\bbb q, v) = & \frac{2p}{\Pi} \cdot \texttt{atan}(c \cdot
    \pi_1(\bbb q, v)) = \frac{2p}{\Pi} \cdot \texttt{atan}(c\frac{n}{v})
  \end{align*}
  is arbitrage-free.  Here $c>0$ is a parameter.  For example, suppose
  the buyer cannot afford the unperturbed query ($v=0$), and settles
  instead for a variance $v = \Theta(n)$ (it corresponds to a standard
  deviation $\sqrt{n}$, which is sufficient for some applications);
  for concreteness, assume $v = 5n$.  Then $\pi(\bbb q, v) =
  \frac{2p}{\Pi} \cdot \texttt{atan}(c/5)$.  To make this price
  affordable, we choose $c \ll 1$, in which case the price becomes
  $\pi \approx 2\cdot c\cdot p/(5\cdot \Pi) = 0.13 \cdot c\cdot p$.
  In \autoref{ex:price:perburb} the price of the unperturbed query was
  $p=\$10,000$, and we wanted to charge $\$1$ for the variance $v = 5n
  = 5000$: for that we can use the pricing function $\pi$ above, with
  $c = 1/ (0.13\cdot p) = 7.85\cdot 10^{-4}$.  We can now answer the
  question in \autoref{ex:price:perburb}: the cost of the query with
  variance $v=50$ is $\pi(\bbb q, v) = \frac{2p}{\Pi} \cdot
  \texttt{atan}(100\cdot c/5)=\$99.94$.
%
\end{example}

\section{Privacy Loss}

\label{sec:leaks}

In this section we describe the second component of the pricing
framework in \autoref{fig:regular}: the privacy loss $\varepsilon_i$.
Recall that, for each buyer's query $\bbb Q = (\bbb{q},v)$, the market
maker defines a random function $\mathcal{K}_{\bbb{Q}}$, such that,
for any database instance $\bbb x$, the random variable
$\mathcal{K}_{\bbb{Q}}(\bbb x)$ has expectation $\bbb q(\bbb x)$ and
variance less than or equal to $v$.  By answering the query through this
mechanism, the market maker leaks some information about each data
item $x_i$, and its owner expects to be compensated appropriately.  In
this section we define formally the privacy loss, and establish a few
of its properties. In the next section we will relate the privacy loss to the
micro-payment that the owner expects.

Our definition of privacy loss is adapted from differential privacy,
which compares the output of a mechanism with and without the
contribution of the data item $x_i$.  For that, we need to impose a
bound on the possible values of $x_i$.  We fix a bounded domain of
values $X \subseteq \R$, and assume that each data item $x_i$ is in
$X$.  For example, in case of binary data values $X = \set{0,1}$ ($0=$
owner does not have the feature, $1=$ she does have the feature), or
in case of ages, $X = [0, 150]$, etc.

Given the database instance $\mathbf{x}$, denote by $\mathbf{x}^{(i)}$
the database instance obtained by setting $x_i = 0$ and leaving all
other values unchanged.  That is, $\mathbf{x}^{(i)}$ represents the
database without the item $i$.

\begin{definition} \label{def:privacy:loss} Let $\mathcal{K}$ be any
  mechanism (meaning: for any database instance $\bbb x$,
  $\mathcal{K}(\bbb x)$ is a random variable).  The {\em privacy loss}
  to user $i$, in notation $\varepsilon_i(\mathcal{K}) \in \bar \R^+$
  is defined as:
  \begin{align*}
    \varepsilon_i(\mathcal{K}) = & \text{sup}_{S,\bbb x} \left| \log
      \frac{\PP{\mathcal{K}(\bbb x) \in S}}{\PP{\mathcal{K}(\bbb
          x^{(i)}) \in S}} \right|
  \end{align*}
  \noindent where $\bbb x$ ranges over $X^n$ and $S$ ranges over
  measurable sets of $\R$.
\end{definition}

We explain the connection to differential privacy in the next section.
For now, we derive some simple properties of the privacy loss
function.  The following are well
known~\cite{DBLP:conf/tcc/DworkMNS06}:

\begin{proposition} (1) Suppose $\mathcal{K}$ is a deterministic
  mechanism.  Then $\varepsilon_i(\mathcal{K})=0$ when $\mathcal{K}$
  is independent of the input $x_i$, and
  $\varepsilon_i(\mathcal{K})=\infty$ otherwise.  (2) Let
  $\mathcal{K}_1, \ldots, \mathcal{K}_m$, be mechanisms with privacy
  losses $\varepsilon_1, \ldots, \varepsilon_m$.  Let $\mathcal{K} =
  c_1 \cdot \mathcal{K}_1 + \ldots + c_m \cdot \mathcal{K}_m$ be a new mechanism 
computed using a linear combination.  Then its privacy loss is
  $\varepsilon(\mathcal{K}) = |c_1| \cdot \varepsilon_1 + \ldots +
  |c_m| \cdot \varepsilon_m$.
\end{proposition}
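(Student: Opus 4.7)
My plan for part (1) is a direct case split on whether $\mathcal{K}$ depends on the coordinate $x_i$. If $\mathcal{K}(\bbb x)$ is the same regardless of $x_i$, then $\mathcal{K}(\bbb x) = \mathcal{K}(\bbb x^{(i)})$ pointwise, so for every measurable $S$ both probabilities $\PP{\mathcal{K}(\bbb x)\in S}$ and $\PP{\mathcal{K}(\bbb x^{(i)})\in S}$ equal the same indicator; the ratio is identically $1$ and the supremum in \autoref{def:privacy:loss} is $0$. Otherwise, I would pick a witness $\bbb x$ for which $\mathcal{K}(\bbb x) \neq \mathcal{K}(\bbb x^{(i)})$ and choose the singleton $S = \set{\mathcal{K}(\bbb x)}$: the numerator is $1$ while the denominator is $0$, so the log-ratio is $+\infty$ and the supremum is $\infty$.

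For part (2), the guiding idea is that $\mathcal{K}$ is a deterministic post-processing of the joint release $(\mathcal{K}_1,\ldots,\mathcal{K}_m)$ produced with independent randomness. I would combine two standard ingredients from \cite{DBLP:conf/tcc/DworkMNS06}: basic composition (independence makes the joint density factor, so the density-ratio bound is multiplicative over the $m$ coordinates), and post-processing invariance (any deterministic function of the joint sample cannot increase the privacy loss). Applied coordinate-wise, the individual bounds $\PP{\mathcal{K}_j(\bbb x)\in S_j}/\PP{\mathcal{K}_j(\bbb x^{(i)})\in S_j} \le e^{\varepsilon_j}$ multiply to give $\PP{\mathcal{K}(\bbb x)\in S}/\PP{\mathcal{K}(\bbb x^{(i)})\in S} \le \exp(\sum_j\varepsilon_j)$, and symmetrically for the reciprocal; taking the logarithm yields the composition estimate.

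The main obstacle, and the step that deserves care, is the appearance of the coefficients $|c_j|$ in the claimed identity. From \autoref{def:privacy:loss} alone, the scaling $\mathcal{K}_j \mapsto c_j\mathcal{K}_j$ is a post-processing for $c_j \ne 0$ and so preserves $\varepsilon_j$ exactly, which by itself would only deliver the weaker upper bound $\sum_j \varepsilon_j$. To match the stated formula one has to read each $\varepsilon_j$ as the privacy loss per unit of query sensitivity -- a viewpoint that is natural for the Laplace mechanism to which \autoref{sec:leaks} will soon specialize, since scaling by $c_j$ multiplies the relevant sensitivity by $|c_j|$ while leaving the noise-to-sensitivity ratio unchanged. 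I would therefore finish by performing the density calculation in the Laplace setting: shifting $\bbb x$ to $\bbb x^{(i)}$ translates each coordinate of the joint noise by an amount proportional to $|c_j|$ times the per-user sensitivity of $\mathcal{K}_j$, so the log-density of $\sum_j c_j \mathcal{K}_j$ changes by at most $\sum_j |c_j|\varepsilon_j$, after which basic composition delivers the stated sum.
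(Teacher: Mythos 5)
Part (1) of your argument is correct, and for what it is worth the paper supplies no proof of this proposition at all --- it is introduced with ``the following are well known'' and a citation to Dwork et al. --- so the only meaningful benchmark is the pair of standard facts that citation actually supports: post-processing invariance and sequential composition of independent mechanisms, which together yield the \emph{inequality} $\varepsilon_i\bigl(f(\mathcal{K}_1,\ldots,\mathcal{K}_m)\bigr)\le\sum_j\varepsilon_j$. Your skeleton for part (2) --- factor the joint density over the independent coordinates, then post-process by the deterministic map $(y_1,\ldots,y_m)\mapsto\sum_j c_j y_j$ --- is exactly that argument and is sound as far as it goes.

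The gap is the final step where you try to manufacture the $|c_j|$ weights, and it cannot be closed, because the literal statement is false under \autoref{def:privacy:loss}. Your own observation already kills the weights: for $c\neq 0$ the map $y\mapsto cy$ is a measurable bijection, so $\varepsilon_i(c\mathcal{K})=\varepsilon_i(\mathcal{K})$ exactly, and with $m=1$, $c_1=1/2$ the claimed formula would give $\varepsilon_1/2$. The Laplace ``repair'' does not rescue this: in $c_j\mathcal{K}_j$ the noise is scaled together with the mean, so under $\bbb x\mapsto\bbb x^{(i)}$ the $j$-th coordinate's mean shifts by at most $|c_j|$ times the sensitivity while its Laplace scale is also multiplied by $|c_j|$; the two factors cancel and the per-coordinate log-density ratio is again bounded by $\varepsilon_j$, not $|c_j|\varepsilon_j$. (Getting $|c_j|\varepsilon_j$ would require scaling the query but not the noise, which is a different mechanism from $c_j\mathcal{K}_j$.) The claimed \emph{equality} also fails independently of the weights: if $\mathcal{K}_1,\mathcal{K}_2$ are independent mechanisms answering the same query with the same noise and $c_1=1$, $c_2=-1$, then $\mathcal{K}_1-\mathcal{K}_2$ is data-independent, so $\varepsilon_i=0$ while the formula predicts $2\varepsilon_1$. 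What your composition-plus-post-processing argument legitimately proves --- and all the paper ever uses downstream, cf.\ \autoref{prop:laplace}, which is likewise only an upper bound --- is $\varepsilon_i\bigl(\sum_j c_j\mathcal{K}_j\bigr)\le\sum_{j:\,c_j\neq 0}\varepsilon_j$. You should stop there and flag the stated identity as an informal gloss rather than attempt to derive it.
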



In this paper we restrict the mechanism to be data-\\independent.

\begin{definition}
  A query-answering mechanism $\mathcal{K}$ is called {\em data
    independent} if, for any query $\bbb Q = (\bbb q, v)$,
  $\mathcal{K}_{\bbb Q}(\bbb x) = \bbb{q}(\bbb x) + \rho(v)$, where
  $\rho(v)$ is a random function.
\end{definition}

In other words, a data-independent mechanism for answering $\bbb Q =
(\bbb q, v)$ will first compute the true query answer $\bbb q(\bbb
x)$, then add a noise $\rho(v)$ that depends only on the buyer's
specified variance, and is independent on the database instance.  We
prove:

\begin{proposition}
  Let $\mathcal{K}$ be any data-independent mechanism.  If the query
  $\bbb Q = (\bbb q, v)$ has the $i^{th}$ component equal to zero,
  $q_i = 0$, then $\varepsilon_i(\mathcal{K}_{\bbb Q}) = 0$.  In other
  words, users who do not contribute to a query's answer suffer no
  privacy loss.
\end{proposition}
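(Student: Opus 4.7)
The plan is to unfold the definitions and observe that if the $i^{th}$ coefficient of $\bbb q$ vanishes, then $\bbb q(\bbb x)$ depends on $\bbb x$ only through coordinates other than $x_i$, so replacing $x_i$ by $0$ changes nothing about the distribution of the mechanism's output.

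More concretely, I would first invoke data-independence to write $\mathcal{K}_{\bbb Q}(\bbb x) = \bbb q(\bbb x) + \rho(v)$ where $\rho(v)$ is a random function whose distribution depends only on $v$ and is independent of $\bbb x$. Then I would compute $\bbb q(\bbb x) = \sum_{j=1}^n q_j x_j$; the assumption $q_i = 0$ lets me drop the $i^{th}$ term, so $\bbb q(\bbb x) = \sum_{j \ne i} q_j x_j = \bbb q(\bbb x^{(i)})$ (since $\bbb x$ and $\bbb x^{(i)}$ agree off coordinate $i$). Consequently the two random variables $\mathcal{K}_{\bbb Q}(\bbb x)$ and $\mathcal{K}_{\bbb Q}(\bbb x^{(i)})$ are constructed from the same deterministic shift plus the same noise distribution $\rho(v)$, hence are identically distributed.

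From there, for every measurable $S \subseteq \R$ and every $\bbb x \in X^n$, $\PP{\mathcal{K}_{\bbb Q}(\bbb x) \in S} = \PP{\mathcal{K}_{\bbb Q}(\bbb x^{(i)}) \in S}$, so the log-ratio in \autoref{def:privacy:loss} equals $\log 1 = 0$ (with the usual convention that $0/0$ contributes $0$ to the supremum since both probabilities vanish simultaneously). Taking the supremum over $S$ and $\bbb x$ yields $\varepsilon_i(\mathcal{K}_{\bbb Q}) = 0$.

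There is no real obstacle here; the only mild subtlety is the measure-theoretic edge case where both probabilities are zero, which I would handle by noting that the definition of the supremum is over sets where the ratio is well-defined, or equivalently by observing that equality of distributions directly implies the two probabilities coincide for every measurable $S$. The proof is essentially immediate once one unfolds the data-independence assumption.
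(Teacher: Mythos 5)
Your proof is correct and follows essentially the same route as the paper's: unfold data-independence, observe that $q_i=0$ forces $\bbb q(\bbb x)=\bbb q(\bbb x^{(i)})$, so the two output random variables coincide and the log-ratio in the definition of privacy loss is identically zero. The extra care you take with the $0/0$ edge case is a harmless elaboration the paper omits.
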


\begin{proof}
  The two random variables $\mathcal{K}_{\bbb Q}(\bbb x)$ and
  $\mathcal{K}_{\bbb Q}(\bbb x^{(i)})$ are equal, because
  $\mathcal{K}_{\bbb Q}(\bbb x) = \bbb{q}(\bbb x) + \rho(v) =
  \bbb{q}(\bbb x^{(i)}) + \rho(v) = \mathcal{K}_{\bbb Q}(\bbb
  x^{(i)})$, which proves the claim.
\end{proof}

In contrast, a data-dependent mechanism might compute the noise as a
function of all data items $\bbb x$, and may result in a privacy loss
for the data item $x_i$ even when $q_i= 0$.  For that reason we only
consider data-independent mechanisms in this paper.

The privacy loss given by \autoref{def:privacy:loss} is difficult to
compute in general.  Instead, we will follow the techniques developed
for differential privacy, and give an upper bound based on query
sensitivity.  
Let $\gamma = \text{sup}_{x \in X}|x|$.

\begin{definition}[Personalized Sensitivity]\label{def:person:sensitivity}
 \hspace*{-5pt}The \emph{sensitivity} $s_i$ of a query $\bbb q$ at data item $x_i$ is
  defined as
\begin{align*}
  s_i = \text{sup}_{\bbb{x}\in X^n} |\bbb q(\bbb{x}) - \bbb q(\bbb{x}^{(i)})| = \gamma \cdot |q_i|.
\end{align*}
\end{definition}

We let $Lap(b)$ denote the one-dimensional Laplacian distribution
centered at $0$ with scale $b$ and the corresponding probability
density function $g(x) = \frac{1}{2\cdot b}e^{-\frac{|x|}{b}}$.

\begin{definition} \label{def:laplace:mechanism} The {\em Laplacian
    Mechanism}, denoted $\mathcal{L}$, is the data-independent
  mechanism defined as follows: for a given query $\bbb Q = (\bbb q,
  v)$ and database instance $\bbb x$, the mechanism returns
  $\mathcal{L}_{\bbb Q}(\bbb x) = \bbb q(\bbb x) + \rho$, where $\rho$
  is noise with distribution $Lap(b)$ and $b = \sqrt{v/2}$.
\end{definition}

The following is known from the work on differential
privacy~\cite{DBLP:conf/tcc/DworkMNS06}.

\begin{proposition}\label{prop:laplace}
  Let $\mathcal{L}$ be the Laplacian mechanism and $\bbb Q = (\bbb q,
  v)$ be a query.  Then, the privacy loss of individual $i$ is bounded
  by:
\begin{align*}
  \varepsilon_i(\mathcal{L}_{\bbb Q}) \le \frac{\gamma}{\sqrt{v/2}}  |q_i|.
\end{align*}
\end{proposition}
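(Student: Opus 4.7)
The plan is to reduce the claim to the standard Laplace-mechanism calculation from differential privacy, then plug in the parameters of \autoref{def:laplace:mechanism}. Since $\mathcal{L}$ is data-independent, for a fixed query $\bbb Q = (\bbb q, v)$ the random variables $\mathcal{L}_{\bbb Q}(\bbb x)$ and $\mathcal{L}_{\bbb Q}(\bbb x^{(i)})$ differ only by a deterministic translation of the Laplacian noise: one is distributed as $\bbb q(\bbb x) + \rho$ and the other as $\bbb q(\bbb x^{(i)}) + \rho$, with $\rho \sim \mathrm{Lap}(b)$ for $b = \sqrt{v/2}$. Thus both distributions admit densities $g(y - \bbb q(\bbb x))$ and $g(y - \bbb q(\bbb x^{(i)}))$ where $g$ is the Laplace density.

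First, I would bound the pointwise ratio of these densities. For any $y \in \R$ and any $a, a' \in \R$, the ratio $g(y-a)/g(y-a') = \exp((|y-a'| - |y-a|)/b)$, and by the reverse triangle inequality $||y-a'| - |y-a|| \le |a - a'|$. Applied with $a = \bbb q(\bbb x)$ and $a' = \bbb q(\bbb x^{(i)})$, this gives
\begin{align*}
\left| \log \frac{g(y - \bbb q(\bbb x))}{g(y - \bbb q(\bbb x^{(i)}))} \right| \le \frac{|\bbb q(\bbb x) - \bbb q(\bbb x^{(i)})|}{b}.
\end{align*}
By \autoref{def:person:sensitivity}, the numerator on the right is at most $s_i = \gamma |q_i|$, uniformly over $\bbb x \in X^n$.

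Next, I would lift this pointwise density bound to a bound on the probability ratio over any measurable $S$. Integrating the inequality $g(y - \bbb q(\bbb x)) \le e^{s_i/b}\, g(y - \bbb q(\bbb x^{(i)}))$ over $y \in S$ yields $\PP{\mathcal{L}_{\bbb Q}(\bbb x) \in S} \le e^{s_i/b}\, \PP{\mathcal{L}_{\bbb Q}(\bbb x^{(i)}) \in S}$, and symmetrically in the other direction. Taking logarithms and then the supremum over $S$ and $\bbb x$ gives $\varepsilon_i(\mathcal{L}_{\bbb Q}) \le s_i/b$. Substituting $b = \sqrt{v/2}$ and $s_i = \gamma |q_i|$ produces the stated bound.

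There is no real obstacle here — this is essentially the classical differential-privacy analysis of the Laplace mechanism, adapted to the personalized sensitivity $s_i = \gamma |q_i|$ introduced above. The only care needed is in the initial pointwise step (correctly applying the reverse triangle inequality to handle the absolute value in the definition of $\varepsilon_i$) and in noting that the integration argument preserves the exponential ratio bound, so that no measurability or continuity subtleties arise.
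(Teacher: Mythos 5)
Your proof is correct: the paper itself gives no proof of this proposition, merely citing it as known from Dwork et al., and your argument is precisely that standard Laplace-mechanism analysis (pointwise density-ratio bound via the reverse triangle inequality, integration over measurable sets, then substitution of $s_i = \gamma|q_i|$ and $b = \sqrt{v/2}$). Nothing is missing.
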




\section{Micro-Payments to Data Owners}

\label{sec:user}

In this section we describe the third component of the pricing
framework of \autoref{fig:regular}: the micro-payments $\mu_i$.  By
answering a buyer's query $\bbb Q$, using some mechanism
$\mathcal{K}_{\bbb Q}$, the market maker leaks some of the private
data of the data owners; he must compensate each data owner with a
micro-payment $\mu_i(\bbb Q)$, for each data item $x_i$ that they
own. The micro-payment close the loop in \autoref{fig:regular}: they
must be covered by the buyer's payment $\pi$, and must also be a
function of the degree of the privacy loss $\varepsilon_i$.  We make
these connections precise in the next section.  Here, we state two
simple properties that we require the micro-payments to satisfy.


\begin{definition} \label{def:micro:d1} Let $\mu_i$ be a
  micro-payment function.  We define the following two properties:
  \begin{description}
  \item[Fairness] For each $i$, if $q_i = 0$, then $\mu_i(\mathbf{q},
    v) = 0$.
  \item[Micro arbitrage-free] For each $i$, $\mu_i(\bbb Q)$ is an
    arbitrage-free pricing function.
  \end{description}
\end{definition}

Fairness is self-explanatory: data owners whose data is not queried
should not expect payment.  Arbitrage-freeness is a promise that the
owner's loss of privacy will be compensated, and that there is no way
for the buyer to circumvent the due micro-payment by asking other
queries and combining their answers.  This is similar to, but distinct
from arbitrage-freeness of $\pi$, and must be verified for each user.

\section{Balanced Pricing Frameworks}

\label{sec:framework}

Finally, we discuss the interaction between the three components in
\autoref{fig:regular}, the query price $\pi$, the privacy loss
$\varepsilon_i$, and the micro-payments $\mu_i$, and define formally
when a pricing framework is {\em balanced}.  Then, we give a general
procedure for designing a balanced pricing framework.

\subsection{Balanced Pricing Frameworks: Definition}


The contract between the data owner of item $x_i$ and the market-maker
consists of a non-decreasing function $W_i : \bar \R^+ \rightarrow
\bar \R^+$, s.t. $W_i(0)=0$.  This function represents a guarantee to
the data owner that she will be compensated with at least $\mu_i \geq
W_i(\varepsilon_i)$ in the event of a privacy loss $\varepsilon_i$.
We denote $\bbb W = (W_1, \ldots, W_n)$ the set of contracts between the
market-maker and all data owners.

The connection between the micro-payments $\mu_i$, the query price
$\pi$ and the privacy loss $\varepsilon_i$ is captured by the
following definition.
\begin{definition} \label{def:micro:d2} We say that the micro-payment
  functions $\mu_i$, $i=1,\ldots, n$ are {\em cost-recovering} for a
  pricing function $\pi$ if, for any query $\bbb Q$, $\pi(\bbb Q) \ge
  \sum_i \mu_i(\bbb Q)$.

  Fix a query answering mechanism $\mathcal{K}$.  We say that a
  micro-payment function $\mu_i$ is {\em compensating} for a contract
  function $W_i$, if for any query $\bbb Q$, $\mu_i(\bbb Q) \geq
  W_i(\varepsilon_i(\mathcal{K}_{\bbb Q}))$.
\end{definition}

The market maker will insist that the micro-payment functions is
cost-recovering: otherwise, he will not be able to pay the data owners
from the buyer's payment.  A data owner will insist that the
micro-payment function is compensating: this enforces the contract
between her and the market-maker, guaranteeing that she will be
compensated at least $W_i(\varepsilon_i)$, in the event of a privacy
loss $\varepsilon_i$.

Fix a query answering mechanism $\mathcal{K}$.  We denote a pricing
framework $(\pi, \bbb{\bf \varepsilon}, \bbb \mu, \bbb W)$, where
$\pi(\bbb Q)$, $\mu_i(\bbb Q)$ are the buyer's price and the
micro-payments, $\varepsilon = (\varepsilon_1, \ldots, \varepsilon_n)$
where $\varepsilon_i(\mathcal{K}_{\bbb Q})$ is the privacy loss
corresponding to the mechanism $\mathcal{K}$, and $W_i(\varepsilon)$
is the contract with the data owner $i$.


\begin{definition}
  A pricing framework $(\pi, \bbb \varepsilon, \bbb \mu, \bbb W)$
  is {\em balanced} if (1) $\pi$ is arbitrage-free and (2) the
  micro-payment functions $\bbb \mu$ are fair, micro arbitrage-free,
  cost-recovering for $\pi$, and compensating for $\bbb W$.
\end{definition}


We explain how the contract between the data owner and the market
maker differs from that in privacy-preserving mechanisms.  Let
$\varepsilon > 0$ be a small constant.  A mechanism $\mathcal{K}$ is
called {\em differentially private}~\cite{DBLP:journals/cacm/Dwork11}
if, for any user $i$ and for any measurable set $S$, and any database
instance $\bbb x$:
\begin{align*}
  \PP{\mathcal{K}(\bbb x) \in S} \le e^{\varepsilon} \times
  \PP{\mathcal{K}(\bbb{x}^{(i)}) \in S}
\end{align*}
In differential privacy, the basic contract between the mechanism
and the data owner is the promise to every user that her privacy loss
is no larger than $\varepsilon$.  In our framework for pricing private
data we turn this contract around.  Now, privacy {\em is} lost, and
\autoref{def:privacy:loss} quantifies this loss.  The contract is that
the users are compensated according to their privacy loss.  At an
extreme, if the mechanism is $\varepsilon$-differentially private for
a tiny $\varepsilon$, then each user will receive only a tiny
micro-payment $W_i(\varepsilon)$; as her privacy loss increases, she
will be compensated more.

The micro-payments circumvent a fundamental limitation of
differentially-private mechanisms.  In differential privacy, the buyer
has a fixed budget $\varepsilon$ for all queries that he may ever ask.
In order to issue $N$ queries, he needs to divide the privacy budget
among these queries, and, as a result, each query will be perturbed
with a higher noise; after issuing these $N$ queries, he can no longer
query the database, because otherwise the contract with the data owner
would be breached.  In our pricing framework there is no such
limitation, because the buyer simply pays for each query.  The budget
is now a real dollar budget, and the buyer can ask as many query as he
wants, with as high accuracy as he wants, as long as he has money to
pay for them.

\subsection{Balanced Pricing Frameworks: Synthesis}

Call $(\bbb \varepsilon, \bbb \mu, \bbb W)$ {\em semi-balanced} if all
micro-payment functions are fair, micro-arbitrage free, and
compensating w.r.t. $\mathcal{K}$; that is, we leave out the pricing
function $\pi$ and the cost-recovering requirement.  The first step is to design a semi-balanced  set of micro-payment
functions.

\begin{proposition}\label{prop:basic}
  Let $\mathcal{L}$ be the Laplacian Mechanism, and let the contract
  functions be linear, $W_i(\varepsilon_i) = c_i \cdot \varepsilon_i$,
  where $c_i > 0$ is a fixed constant, for $i=1,\ldots,n$.  Define the
  micro-payment functions $\mu_i(\bbb Q) = \frac{\gamma \cdot
    c_i}{\sqrt{v/2}}|q_i|$, for $i=1,\ldots,n$.  Then $(\bbb \varepsilon,
  \bbb \mu, \bbb W)$ is semi-balanced.
\end{proposition}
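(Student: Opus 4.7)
The plan is to verify the three defining properties of a semi-balanced triple $(\bbb\varepsilon, \bbb\mu, \bbb W)$ -- fairness, micro arbitrage-freeness, and compensation -- separately for each coordinate $i$.

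\textbf{Fairness} is immediate from the explicit formula: if $q_i = 0$ then $\mu_i(\bbb q, v) = \frac{\gamma\cdot c_i}{\sqrt{v/2}}|q_i| = 0$. So no real work is needed here.

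\textbf{Compensation} is a direct consequence of \autoref{prop:laplace}. That proposition bounds the privacy loss of the Laplacian mechanism by $\varepsilon_i(\mathcal{L}_{\bbb Q})\le \frac{\gamma}{\sqrt{v/2}}|q_i|$, and since $W_i$ is linear with non-negative slope $c_i$, multiplying through by $c_i$ gives exactly $W_i(\varepsilon_i(\mathcal{L}_{\bbb Q})) \le \frac{\gamma\cdot c_i}{\sqrt{v/2}}|q_i| = \mu_i(\bbb Q)$.

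\textbf{Micro arbitrage-freeness} is the only step that requires invoking the machinery from Section 3. The idea is to write $\mu_i$ as a composition of previously-established arbitrage-free functions. First, consider $\pi_0(\bbb q, v) = q_i^2/v$; this is of the form $f^2(\bbb q)/v$ where $f(\bbb q) = |q_i|$ is a semi-norm on $\R^n$ (coordinate projection followed by absolute value), so by \autoref{prop:abfreetonorm} it is arbitrage-free. Next, apply the power operation with exponent $1/2$ from \autoref{cor:compfunction} to get that $\sqrt{\pi_0} = |q_i|/\sqrt{v}$ is arbitrage-free. Finally, scaling by the positive constant $\sqrt{2}\,\gamma\,c_i$ is a special case of the linear-combination closure property in \autoref{cor:compfunction}, yielding that $\mu_i(\bbb q, v) = \sqrt{2}\,\gamma\,c_i\cdot |q_i|/\sqrt{v} = \frac{\gamma\cdot c_i}{\sqrt{v/2}}|q_i|$ is arbitrage-free.

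No single step should be hard; the only mild subtlety is recognizing that $\mu_i$ is not itself of the form $f^2/v$ (the dependence on $v$ is $1/\sqrt{v}$, not $1/v$), so one cannot apply \autoref{prop:abfreetonorm} directly and must route through the power-operation corollary, just as the paper did for the worked example $\max_i|q_i|/\sqrt{v}$.
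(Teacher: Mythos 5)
Your proof is correct and follows essentially the same route as the paper: fairness from the explicit formula, compensation from \autoref{prop:laplace}, and micro arbitrage-freeness by writing $\mu_i$ as the square root of an arbitrage-free function of the form $f^2(\bbb q)/v$ and closing under the operations of \autoref{cor:compfunction}. The only cosmetic difference is that the paper absorbs the constant $\sqrt{2}\,\gamma\,c_i$ into the weighted $L_2$ norm of \autoref{eq:weighted:norm} before taking the square root, whereas you apply it afterwards as a nonnegative scalar multiple; both are valid instantiations of the same machinery.
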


\begin{proof}
  Each $\mu_i$ is fair, because $q_i = 0$ implies $\mu_i = 0$.  By
  setting $w_i = 2 \gamma^2 \cdot c_i^2$ and $w_j = 0$ for $j \neq
  i$ in \autoref{eq:weighted:norm}, the function $\pi_i(\bbb Q) =
  \frac{2 \gamma^2 \cdot c_i^2 \cdot q_i^2}{v}$ is arbitrage free.
  By \autoref{cor:compfunction}, the function $\mu_i(\bbb Q) =
  \left(\pi_i(\bbb Q)\right)^{1/2}$ is also arbitrage-free, which
  means that $\mu_i$ is micro-arbitrage free.  Finally, by
  \autoref{prop:laplace}, we have $W_i(\varepsilon_i(\mathcal{L}_{\bbb
    Q})) = c_i \cdot \varepsilon_i(\mathcal{L}_{\bbb Q})  \le c_i
  \frac{\gamma}{\sqrt{v/2}}  |q_i| = \mu_i(\bbb Q)$, proving that
  $\mu_i$ is compensating.
\end{proof}

Next, we show how to derive new semi-balanced micro-payments from
existing ones.

\begin{proposition}\label{prop:f:mu} Suppose that $(\bbb \varepsilon,
  \bbb \mu^j, \bbb W^j)$ is semi-\\balanced, for $j=1,\ldots,k$ (where $\bbb
  \mu^j = (\mu_1^j, \ldots, \mu_n^j)$, and $\bbb W^j = (W^j_1, \ldots,
  W^j_n)$, for $j=1,\ldots,k$), and let $f_i : (\bar \R^+)^k \rightarrow \bar
  \R^+$, $i=1,\ldots,n$, be $n$ non-decreasing, sub-additive functions
  s.t. $f_i(\bbb 0) = 0$, for all $i=1,\ldots,n$.  Define $\mu_i =
  f_i(\mu_i^1, \ldots, \mu_i^k)$, and $W_i = f_i(W_i^1, \ldots,
  W_i^k)$, for each $i=1,\ldots, n$.  Then, $(\bbb \varepsilon, \bbb \mu, \bbb
  W)$ is also semi-balanced, where $\bbb \mu = (\mu_1, \ldots, \mu_n)$
  and $\bbb W = (W_1, \ldots, W_n)$.
\end{proposition}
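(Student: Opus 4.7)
The plan is to verify the three requirements of semi-balanced directly from the definitions, exploiting the fact that each $f_i$ is non-decreasing, sub-additive, and vanishes at the origin, and that each input tuple $(\bbb \varepsilon, \bbb \mu^j, \bbb W^j)$ already satisfies these properties. First, I would observe that the new contract function $W_i = f_i(W_i^1, \ldots, W_i^k)$ inherits the required structure (non-decreasing and $W_i(0)=0$) from the monotonicity of $f_i$ and each $W_i^j$, together with $f_i(\bbb 0)=0$, so the setup is well-posed.

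Next, I would verify each of the three semi-balanced properties of $\bbb \mu = (\mu_1, \ldots, \mu_n)$ one at a time. For fairness, if $q_i = 0$ then by fairness of each $\bbb \mu^j$ we get $\mu_i^j(\bbb q,v)=0$ for all $j$, so $\mu_i(\bbb q,v) = f_i(\bbb 0) = 0$. For the micro arbitrage-free property, since each $\mu_i^j$ is arbitrage-free as a function of $\bbb Q$ and $f_i$ is non-decreasing and sub-additive, I would invoke \autoref{prop:composition} directly to conclude that $\mu_i = f_i(\mu_i^1, \ldots, \mu_i^k)$ is arbitrage-free.

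For the compensating property, I would use the assumption that each $(\bbb \varepsilon, \bbb \mu^j, \bbb W^j)$ is semi-balanced, so $\mu_i^j(\bbb Q) \geq W_i^j(\varepsilon_i(\mathcal{K}_{\bbb Q}))$ for every $j$. Applying $f_i$, which is non-decreasing in each coordinate, to both sides (as vectors in $(\bar\R^+)^k$) gives
\[
\mu_i(\bbb Q) = f_i(\mu_i^1(\bbb Q), \ldots, \mu_i^k(\bbb Q)) \;\geq\; f_i(W_i^1(\varepsilon_i(\mathcal{K}_{\bbb Q})), \ldots, W_i^k(\varepsilon_i(\mathcal{K}_{\bbb Q}))) = W_i(\varepsilon_i(\mathcal{K}_{\bbb Q})),
\]
which establishes compensating.

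None of the three steps presents a real obstacle; the construction was essentially engineered so that the hypotheses on $f_i$ line up exactly with what each property requires. The only subtlety worth flagging in the write-up is that the composition proposition (\autoref{prop:composition}) is stated for a single $f$ applied to arbitrage-free price functions in the argument $\bbb Q$, and here we are composing $f_i$ with the $\mu_i^j(\bbb Q)$ in exactly this form, so the hypothesis applies verbatim. I would therefore present the argument as a short, three-bullet verification, keeping the emphasis on where each hypothesis of $f_i$ (monotonicity, sub-additivity, $f_i(\bbb 0) = 0$) is used.
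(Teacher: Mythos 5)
Your proof is correct and follows essentially the same route as the paper's: fairness via $f_i(\bbb 0)=0$, the micro arbitrage-free property via \autoref{prop:composition}, and the compensating property via monotonicity of $f_i$ applied coordinatewise to $\mu_i^j(\bbb Q) \geq W_i^j(\varepsilon_i(\mathcal{K}_{\bbb Q}))$. No gaps.
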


\begin{proof}
  First, we prove fairness for $\mu_i$: if $q_i = 0$, then
  $\mu_i^1(\bbb Q) = \ldots = \mu_i^k(\bbb Q)=0$ because, by
  assumption, each $\mu_i^j$ is fair.  Hence, $f_i(\mu_i^1(\bbb Q),
  \ldots, \mu_i^k(\bbb Q))=0$ because $f_i(\bbb 0)=0$.  Next, by
  \autoref{prop:composition}, each $\mu_i$ is arbitrage-free.
  Finally, each $\mu_i$ is compensating for $W_i$, because the
  functions $f_i$ are non-decreasing, and each $\mu_i^j$ is
  compensating for $W_i^j$, hence $f_i(\mu_i^1(\bbb Q), \ldots,
  \mu_i^k(\bbb Q)\!\geq\!f_i(W_i^1(\varepsilon_i(\mathcal{K}_{\bbb
    Q})), \ldots, W_i^k(\varepsilon_i(\mathcal{K}_{\bbb Q}))) =
  W_i(\varepsilon(\mathcal{K}_{\bbb Q}))$.
\end{proof}

We can use this proposition to design micro-payment functions that
allow the true private data of an individual to be disclosed, as in
\autoref{subsec:true:data}.  We illustrate this with an example.

\begin{example}
  Consider \autoref{ex:1}, where several voters give a rating in
  $\set{0,1,2,3,4,5}$ to each of two candidates $A$ and $B$.  Thus,
  $x_1, x_2$ represent the ratings of voter 1, $x_3, x_4$ of voter 2,
  etc.  Suppose voter 1 values her privacy highly, and would never
  accept a total disclosure: we choose linear contract functions
  $W_1(\varepsilon) = W_2(\varepsilon) = c \cdot \varepsilon$ for her
  two votes, and define the micro-payments as in \autoref{prop:basic},
  $\mu_i(\bbb Q) = \frac{6 \cdot c}{\sqrt{v/2}}|q_i|$ for $i=1,2$.  On
  the other hand, voter 2 is less concerned about her privacy, and is
  willing to sell the true values of her votes, at some high price $d
  > 0$: then we choose bounded contract functions $W_3(\varepsilon) =
  W_4(\varepsilon) = 2\cdot d/\Pi \cdot \texttt{atan}(\varepsilon)$
  (which is sub-additive, by \autoref{COR:FINITE}), and define the
  micro-payments accordingly, $\mu_i(\bbb Q) = 2\cdot d/\Pi \cdot
  \texttt{atan}(\frac{6}{\sqrt{v/2}}|q_i|)$, for $i=3,4$.  By
  \autoref{prop:f:mu} this function is also compensating and micro
  arbitrage-free, and, moreover, it is bounded by $\mu_i \leq d$,
  where the upper bound $d$ is reached by the total-disclosure query
  ($v=0$).
\end{example}

Finally, we choose a payment function such as to ensure that the
micro-payments are cost-recovering.

\begin{proposition} \label{prop:mu:sum} (1) Suppose that $(\bbb
  \varepsilon, \bbb \mu, \bbb W)$ is semi-\\balanced, and define
  $\pi(\bbb Q) = \sum_i \mu_i(\bbb Q)$.  Then, $(\pi, \bbb
  \varepsilon, \bbb \mu, \bbb W)$ is balanced.

  (2) Suppose that $(\pi, \bbb \varepsilon, \bbb \mu, \bbb W)$ is
  balanced and $\pi' \geq \pi$ is any arbitrage-free pricing function.
  Then $(\pi', \bbb \varepsilon, \bbb \mu, \bbb W)$ is also balanced.
\end{proposition}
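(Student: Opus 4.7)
The plan is to unfold the definitions of \emph{balanced} and \emph{semi-balanced} and verify each of the four required properties of $\bbb\mu$ together with arbitrage-freeness of the pricing function, in each of the two parts.

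For part (1), semi-balancedness of $(\bbb\varepsilon,\bbb\mu,\bbb W)$ immediately supplies three of the four required properties of $\bbb\mu$: fairness, micro arbitrage-freeness, and compensation with respect to $\bbb W$, since none of these involves $\pi$. Cost-recovery is automatic from the definition $\pi(\bbb Q)=\sum_i \mu_i(\bbb Q)$, since it holds with equality. The one nontrivial point is arbitrage-freeness of $\pi$ itself: since micro arbitrage-freeness means each $\mu_i$, viewed as a function of $\bbb Q$, is an arbitrage-free price function, I would invoke the linear-combination closure from Corollary~\ref{cor:compfunction} (with all coefficients equal to $1$) to conclude that $\pi=\sum_i\mu_i$ is arbitrage-free.

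For part (2), the three properties of $\bbb\mu$ that do not mention $\pi$—fairness, micro arbitrage-freeness, and compensation with respect to $\bbb W$—carry over unchanged from the hypothesis that $(\pi,\bbb\varepsilon,\bbb\mu,\bbb W)$ is balanced. Arbitrage-freeness of $\pi'$ is assumed. The only property that must be reverified is cost-recovery: using the pointwise inequality $\pi'\ge\pi$ together with the cost-recovery of $\bbb\mu$ for $\pi$, we get $\pi'(\bbb Q)\ge\pi(\bbb Q)\ge\sum_i\mu_i(\bbb Q)$ for every query $\bbb Q$, which is exactly cost-recovery for $\pi'$.

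There is no real obstacle here; the statement is essentially a direct consequence of how \emph{balanced} and \emph{semi-balanced} are defined. The only step that invokes a nontrivial earlier result is the appeal to Corollary~\ref{cor:compfunction} to see that a finite sum of arbitrage-free price functions is arbitrage-free; everything else is a one-line check from the definitions.
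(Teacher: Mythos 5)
Your proof is correct and follows the same route as the paper: the paper also reduces part (1) to the closure of arbitrage-free price functions under sums (Corollary~\ref{cor:compfunction}) and dismisses part (2) as straightforward, which is precisely the pointwise inequality $\pi'\ge\pi\ge\sum_i\mu_i$ you spell out. Your write-up just makes explicit the definition-chasing that the paper leaves implicit.
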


\begin{proof}
  Claim (1) follows from \autoref{cor:compfunction} (the sum of
  arbitrage-free functions is also arbitrage-free), while claim (2) is
  straighforward.
\end{proof}

To summarize, the synthesis procedure for a pricing framework proceeds
as follows.  Start with the simple micro-payment functions given by
\autoref{prop:basic}, which ensure linear compensation for each
user.  Next, modify both the micro-payment and the contract functions
using \autoref{prop:f:mu}, as desired, in order to adjust to the
preferences of individual users, for example, in order to allow a user
to set a price for her true data.  Finally, define the query price to
be the sum of all micropayments (\autoref{prop:mu:sum}), then increase
this price freely, by using any method in \autoref{cor:compfunction}.

%
%



\section{Discussion} \label{sec:incentives}

In this section, we discuss two problems in pricing private data, and
show how they affect our pricing framework.  The first is how to
incentivize data owners to participate in the database and truthfully
report their privacy valuations, which is reflected in her contract
function $W_i$: this property is called {\em truthfulness} in
mechanism design.  The second concerns protection of the privacy
valuations itself, meaning that the contract $W_i$ may also leak
information to the buyer.


\subsection{Truthfulness}

How can we incentivize a user to participate, and to reveal her true
assessment for the privacy loss of a data item $x_i$?  All things being
equal, the data owner will quote an impossibly high price, for even a
tiny loss of her privacy.  In other words, she would choose a contract
function $W(\varepsilon)$ that is as close to $\infty$ as possible.

Incentivizing users to report their true valuation is a goal of
{\em mechanism design}.  This has been studied for private data only
in the restricted case of a single query, and has been shown to be a
difficult task.  Ghosh and Roth \cite{DBLP:conf/sigecom/GhoshR11}
 show that if the privacy valuations are sensitive, then it
is impossible to design truthful and individually rational direct
revelation mechanisms.
Fleischer et al circumvent this impossibility
result by assuming that the privacy valuation is drawn from known
probability distributions~\cite{DBLP:conf/sigecom/FleischerL12}.
Also, according to some experimental studies~\cite{Acquisti}, the
owner's valuation is often complicated and difficult for the owner to
articulate and different people may have quite different
valuations. Indeed, without a context or reference, it is hard for
people to understand the valuation of their private data. The design
of a truthful and private mechanism for private data, even for a
single query, remains an active research topic.


We propose a simpler approach, adopted directly from that introduced
by Aperjis and Huberman~\cite{DBLP:journals/firstmonday/AperjisH12}.
Instead of asking for their valuations, users are given a fixed number
of options.  For example, the users may be offered a choice between
two contract functions, shown in \autoref{fig:options}, which we call
Options A and B
(following~\cite{DBLP:journals/firstmonday/AperjisH12}):

\begin{description}
\item[Option A] For modest privacy losses,
  there is a small micro-payment, but for significant privacy losses there
  is a significant micro-payment.
\item[Option B] There is a non-zero micro-payment for even the smallest privacy
  losses, but even the maximal payment is much lower than that of
  Option A.
\end{description}

While these options were initially designed for a sampling-based query
answering mechanism~\cite{DBLP:journals/firstmonday/AperjisH12}, they
also work for our perturbation-based mechanism.  Risk-tolerant users
will typically choose Option A, while risk-averse users will choose
Option B.  Clearly, a good user interface will offer more than two
options; designing a set of options that users can easily understand
is a difficult task, which we leave to future work.


\begin{figure}[th]
  \centerline{\includegraphics[height=150pt]{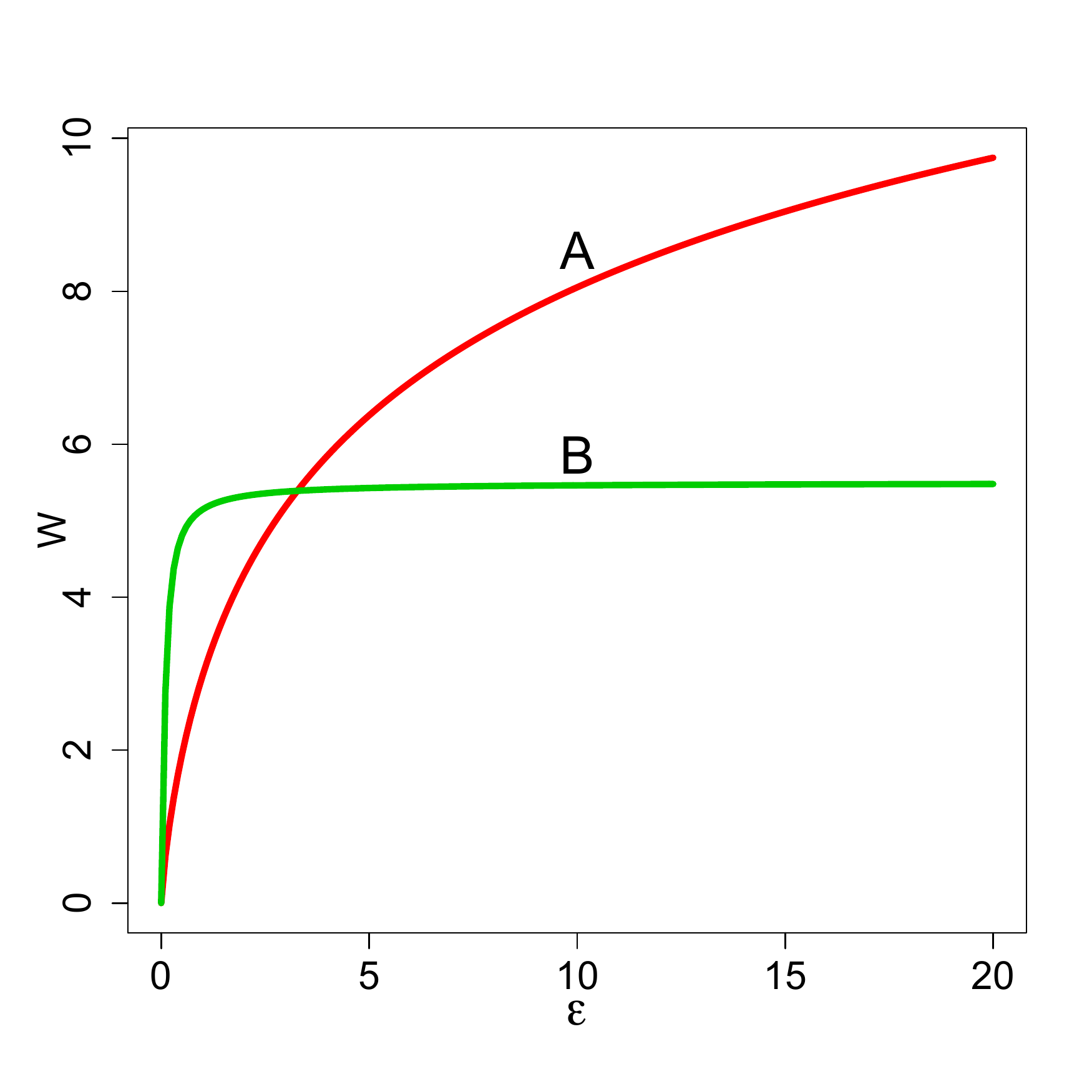}}
  \caption{\label{fig:options} Two options for the contract function
    $W$.  Option A makes a small micro-payment for small privacy losses
    and a large payment for large privacy losses.  Option B pays
    even for small privacy losses, but for large privacy
    losses pays less than A.  Risk-neutral users would typically choose
    Option A, while risk-averse users choose Option B.}
\end{figure}

\subsection{Private Valuations}

When users have sufficient freedom to choose their privacy valuation
(i.e.  their contract function $W_i$), then we may face another
difficult problem: the privacy valuation may be strongly correlated
with the data $x_i$ itself.  In that case, even releasing the price of a query may lead
to privacy loss, a factor not considered in our framework.  For example, consider a database of HIV status:
$x_i=1$ means that data owner $i$ has HIV, $x_i = 0$ means that she
does not.  Typically, users who have HIV will set a much higher value
on privacy of their $x_i$ than those who don't have HIV.  For example,
users without HIV may ask for \$1 for $x_i$, while users who do have
HIV may ask for \$1000.  Then, a savvy buyer may simply ask for the
price of a query, without actually purchasing the query, and determine
with some reasonable confidence whether a user has HIV.  Hiding the
valuation itself is a difficult problem, which is still being actively
researched in mechanism design~\cite{DBLP:conf/sigecom/FleischerL12}.

If the price itself is private, then inquires about prices need to be
perturbed in the same fashion as queries on the data.  Thus, the price
$\pi(\bbb Q)$ and the micro-payments $\mu_i(\bbb Q)$ need to be random
variables.  Queries are answered using a mechanism $\mathcal{K}$, while
prices are computed using a (possibly different) mechanism
$\mathcal{K}'$.  We show, briefly, that, if the contract functions are
linear $W_i = c_i \cdot \varepsilon_i$, then it is possible to extend
our pricing framework to ensure that data owners are compensated both
for the privacy loss from the query and the privacy loss from the
price function.  The properties of arbitrage-freeness, cost-recovery, and
compensation are now defined in terms of expected values. For example,
a randomized price function $\pi(\bbb Q)$ is arbitrage-free, if
$\set{\bbb Q_1,\ldots,\bbb Q_m}\rightarrow \bbb Q$ implies
$\EE{\pi(\bbb Q)} \leq \sum_{i=1}^m \EE{\pi(\bbb Q_i)}$.

Now the privacy loss for data item $x_i$ includes two parts. One part
is due to the release of the query answer, and the other part is due
to the release of the price.  Their values are
$\epsilon_i(\mathcal{K})$ and $\epsilon_i(\mathcal{K}')$ respectively.
A micropayment is {\em compensating} if $\EE{\mu_i(\bbb Q)} \geq c_i
\cdot (\varepsilon_i(\mathcal{K}) + \epsilon_i(\mathcal{K}'))$.

%

As for the data items, we assume that the constants $c_i$ used in the
contract function are drawn from a bounded domain $Y \subseteq \mathbb{R}$, and denote $\delta
= \text{sup}_{c \in Y} |c|$ (in analogy to $\gamma$ defined
in \autoref{sec:leaks}).  Assume that both $\mathcal{K}$ and
$\mathcal{K}'$ are Laplacian mechanisms. Given a query $\bbb Q = (\bbb
q, v)$ , set $b = \sqrt{v/2}$, choose some\footnote{When $b' \le
  \delta$, the expectation of the price $\pi$ is infinite.} $b' >
\delta$, tunable by the market maker. $\mathcal{K}$ is the mechanism
that, on an input $\bbb x$, returns $\bbb q(\bbb x) + \rho$, where
$\rho$ is a noise with distribution $Lap(b)$. $\mathcal{K}'$ is the
mechanism that, on an input $\bbb c$, returns a noisy price
$\frac{\gamma b'}{b\cdot (b' - \delta)} \sum_i c_i|q_i| + \rho'$,
where $\rho'$ is a noise with distribution $Lap(b')$. We denote the
exact price, $\frac{\gamma \cdot b'}{b\cdot (b' - \delta)} \sum_i
c_i\cdot |q_i|$, as $\EE{\mathcal{K}'(\bbb c)}$.  The sensitivity of
the mechanism $\mathcal{K}$ is $s_i(\mathcal{K}) = \gamma \cdot |q_i|$
(\autoref{def:person:sensitivity}).  If we define $s_i(\mathcal{K}') =
\frac{\gamma \cdot b'\cdot |q_i| \delta}{b \cdot (b' - \delta)}$, then
we prove (in the appendix):

\begin{align*}
  \epsilon_i(\mathcal{K}) \le \frac{s_i(\mathcal{K})}{b}, \quad  \epsilon_i(\mathcal{K}') \le \frac{s_i(\mathcal{K}')}{b'}.
\end{align*}

\begin{proposition} \label{PROP:GENERAL}
  Let $\mathcal{K}, \mathcal{K}'$ be Laplacian mechanisms (as
  described above) and $\mathbf{Q} = (\mathbf{q}, v)$ be a query. Set
  (as above), $b = \sqrt{v/2}$ and $b' > \delta$. Define:
\begin{align*}
\pi(\mathbf{Q}) = & \mathcal{K}'(\mathbf{c}) =  \EE{\mathcal{K}'(\bbb  c)} + \rho' \\
\mu_i(\mathbf{Q}) = & (\frac{s_i(\mathcal{K})}{b} + \frac{s_i(\mathcal{K}')}{b'}) \cdot c_i + \frac{\pi(\mathbf{Q})-  \EE{\mathcal{K}'(\bbb c)}}{n}, \\&\forall i=1,\ldots,n
\end{align*}
Then, $(\pi, \bbb \mu, \bbb \varepsilon, \bbb W)$ is a balanced mechanism.
\end{proposition}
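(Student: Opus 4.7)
The plan is to verify each of the four requirements of a balanced framework (fairness, micro arbitrage-freeness, cost-recovery, and compensation), adapted to the randomized/expected setting, by a direct calculation from the definitions of $\pi$ and $\mu_i$ given in the statement. The starting observation is that $\EE{\rho'}=0$, so $\EE{\pi(\bbb Q)}=\EE{\mathcal{K}'(\bbb c)}=\frac{\gamma b'}{b(b'-\delta)}\sum_i c_i|q_i|$, and the stochastic correction term in $\mu_i$ has expectation zero, leaving $\EE{\mu_i(\bbb Q)}=\bigl(\frac{s_i(\mathcal{K})}{b}+\frac{s_i(\mathcal{K}')}{b'}\bigr)c_i$. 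All remaining checks reduce to manipulations of these closed forms.

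First I would dispatch the easy items. \textbf{Fairness} in expectation: if $q_i=0$ then $s_i(\mathcal{K})=s_i(\mathcal{K}')=0$, and the mean of the correction term is zero, so $\EE{\mu_i}=0$. \textbf{Compensation}: substitute the privacy-loss bounds $\varepsilon_i(\mathcal{K})\leq s_i(\mathcal{K})/b$ and $\varepsilon_i(\mathcal{K}')\leq s_i(\mathcal{K}')/b'$ (assumed from the inequalities immediately preceding the statement) into $c_i(\varepsilon_i(\mathcal{K})+\varepsilon_i(\mathcal{K}'))$ and compare with $\EE{\mu_i}$; equality of the upper bound with $\EE{\mu_i}$ gives the inequality. \textbf{Cost-recovery}: a short algebraic computation shows
\[
\sum_i \EE{\mu_i(\bbb Q)}=\frac{\gamma}{b}\sum_i c_i|q_i|\cdot\Bigl(1+\frac{\delta}{b'-\delta}\Bigr)=\frac{\gamma b'}{b(b'-\delta)}\sum_i c_i|q_i|=\EE{\pi(\bbb Q)},
\]
so cost-recovery holds with equality (this is where $b'>\delta$ is needed, ensuring the denominator is positive and the expectation is finite).

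The step that requires the most care is \textbf{arbitrage-freeness in expectation} of $\pi$ and of each $\mu_i$. Both $\EE{\pi(\bbb Q)}$ and $\EE{\mu_i(\bbb Q)}$ are of the form $C\cdot f(\bbb q)/\sqrt{v}$ where $f(\bbb q)=\sum_j w_j|q_j|$ is a weighted $L_1$ semi-norm with nonnegative weights (for $\mu_i$ only the $i$-th weight is nonzero). Writing this as $\bigl(C^2 f(\bbb q)^2/v\bigr)^{1/2}$, the inner expression is arbitrage-free by \autoref{prop:abfreetonorm} (since $f$ is a semi-norm), and taking the square root preserves arbitrage-freeness by \autoref{cor:compfunction}. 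Composing with the nonnegative multiplicative constant, also allowed by \autoref{cor:compfunction}, yields the desired conclusion.

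The main obstacle is conceptual rather than computational: one must be careful that all the notions (fairness, arbitrage-freeness, cost-recovery, compensation) are now interpreted in expectation over $\rho,\rho'$, and that the privacy-loss bound for $\mathcal{K}'$ used in compensation is the one measured with respect to a change in the $i$-th contract coefficient $c_i$ (which is why $\delta=\sup_{c\in Y}|c|$ plays the role that $\gamma$ plays for $\mathcal{K}$). Once these interpretations are pinned down and the two privacy-loss bounds are in hand, the verification is a sequence of direct substitutions followed by the one application of \autoref{prop:abfreetonorm} and \autoref{cor:compfunction} described above.
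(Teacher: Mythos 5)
Your proposal is correct and follows essentially the same route as the paper's appendix proof: verify fairness, compensation, and cost-recovery by direct substitution of the closed forms for $\EE{\mu_i}$ and $\EE{\pi}$, and reduce arbitrage-freeness to \autoref{prop:abfreetonorm} plus the square-root/constant-multiple closure in \autoref{cor:compfunction}. The only (immaterial) differences are that the paper observes cost-recovery holds pointwise with equality, since the $n$ correction terms $\rho'/n$ sum exactly to $\rho'$, and it derives the arbitrage-freeness of $\pi$ as a sum of the arbitrage-free $\mu_i$ rather than directly from the weighted $L_1$ semi-norm.
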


\section{Related Work}  \label{sec:related}


Recent investigation of the tradeoff between privacy and utility in statistical databases was initiated by Dinur and Nissim \cite{DBLP:conf/pods/DinurN03}, and culminated in \cite{DBLP:conf/tcc/DworkMNS06}, where Dwork, McSherry, Nissim and Smith introduced \emph{differential privacy} and the \emph{Laplace mechanism}. The goal of this line of research is
to reveal accurate statistics while preserving the privacy of the individuals. There have been two (somewhat artificially divided) models involved: the non-interactive model, and the interactive model. In this paper, we use an interactive model, in which  queries arrive on-line (one at a time) and the market maker has to charge for them appropriately and answer them. There is a large and growing literature on differential privacy; we refer the readers to the recent survey by Dwork~\cite{DBLP:journals/cacm/Dwork11}. There is privacy loss in releasing statistics in a differentially private sense (quantified in terms of the privacy parameter/budget $\epsilon$). However, this line of research does not consider compensating the privacy loss.

Ghosh and Roth \cite{DBLP:conf/sigecom/GhoshR11} initiated a study of
how to incentivize individuals to contribute their private data and to
truthfully report their privacy valuation using tools of mechanism
design. They consider the same problem as we do, pricing private data,
but from a different perspective: there is only one query, and the
individuals' valuations of their data are private.  The goal is to
design a truthful mechanism for disclosing the valuation.  In
contrast, we assume that the individuals' valuations are public, and
focus instead on the issues arising from pricing multiple queries
consistently.  Another key difference is that we require not only accuracy
but also unbiasedness for the noisy answer to a certain query, while in
\cite{DBLP:conf/sigecom/GhoshR11} answers are not unbiased. There have
been some follow-ups to \cite{DBLP:conf/sigecom/GhoshR11},
e.g. \cite{DBLP:conf/sigecom/FleischerL12,
  DBLP:journals/corr/abs-1202-4741, DBLP:conf/sigecom/RothS12,
  DBLP:journals/corr/abs-1111-2885}; a good survey is
\cite{Roth:2012:BPD:2325713.2325714}. There are some other papers that
consider privacy and utility in the context of mechanism design,
e.g. \cite{DBLP:conf/sigecom/NissimOS12,
  DBLP:journals/corr/abs-1111-5472}.

Economic perspectives on the regulation and control of private information have a long history \cite{Stigler80Introduction,Posner81Economics}.  A national information market, where personal information could be bought and sold, was proposed by Laudon \cite{DBLP:journals/cacm/Laudon96}.  
Garfinkel et al.~\cite{DBLP:journals/tsmc/GarfinkelLNR06} proposed a methodology for releasing approximate answers to statistical queries and compensating contributing individuals as the basis for a market for private data.  That methodology does not use a rigorous measure of privacy loss or protection and 
does not address the problem of arbitrage.

Recently Balazinska, Howe and Suciu \cite{DBLP:journals/pvldb/BalazinskaHS11} initiated a study of data markets in the cloud (for general-purpose data, not specifically private data).  Subsequently, \cite{DBLP:conf/pods/KoutrisUBHS12} proposed a data pricing method which first sets explicit price points on a set of views and then computes the implied price for any query.  However, they did not consider the potential privacy risks of their method.  The query determinacy used in \cite{DBLP:conf/pods/KoutrisUBHS12} is instance-based, and as a result, the adversary could (in the worst case) learn the entire database solely by asking the prices of queries (for free).  Li and Miklau study data pricing for linear aggregation queries~\cite{lipricing} using a notion of instance-independent query determinacy.  This avoids some privacy risks, but it is still sometimes possible to infer query answers for which the buyer has not paid.  Both of the above works consider a model in which unperturbed query answers are exchanged for payment.  In this paper we consider noisy query answers and use an instance-independent notion of query determinacy, which allows us to formally model private disclosures and assign prices accordingly.

Aperjis and Huberman \cite{DBLP:journals/firstmonday/AperjisH12} describe a simple strategy to collect private data from individuals and compensate them, based on an assumption in sociology that some people are risk averse. By doing so, buyers could compensate individuals with relatively less money. More specifically, a buyer may access the private data of an individual with probability $0.2$, and offer her two choices: if the data is accessed, then she would be paid $\$10$, otherwise she would receive nothing; she would receive $\$1$ regardless whether her data would be used or not. Then a risk-averse person may choose the second choice, and consequently the buyer can save $\$1$ in expectation. In their paper, the private data of an individual is either entirely exposed, or completely unused. In our framework, there are different levels of privacy, the privacy loss is carefully quantified and compensated, and thus the data is better protected. Finally, Riederer et al. \cite{Riederer11sale:} propose auction methods to sell private data to aggregators, but an owner's data is either completely hidden or totally disclosed and the price of data is ultimately determined by buyers without consideration of owners' personal privacy valuations.


\section{Conclusions}

\label{sec:conclusions}

We have introduced a framework for selling private data.  Buyers can
purchase any linear query, with any amount of perturbation, and need
to pay accordingly.  Data owners, in turn, are compensated according to
the privacy loss  they incur for each query. In our framework
buyers are allowed to ask an arbitrary number of queries, and we have
designed techniques for ensuring that the prices are arbitrage-free,
meaning that buyers are guaranteed to pay for any information they
may further extract from the queries.  Our pricing framework is
balanced, in the sense that the buyer's price covers the
micro-payments to the data owner, and each micro-payment compensates
the users according to their privacy loss.

An interesting open question is whether we can achieve both truthfulness (as discussed in \cite{DBLP:conf/sigecom/GhoshR11}) and arbitrage-freeness (as discussed in the current paper) when pricing private data.

\paragraph*{Acknowledgements}
We appreciate the comments of each of the anonymous reviewers and, in particular, the suggestion of the example now presented in footnote 1 of \autoref{sec:basic}.  C. Li was supported by NSF CNS-1012748; Miklau was partially supported by NSF CNS-1012748, NSF CNS-0964094, and the European Research Council under the Webdam grant; D. Li and Suciu were supported by NSF IIS-0915054 and NSF CCF-1047815.

\newpage

\bibliographystyle{abbrv}
{ \bibliography{pricing-privacy-bib,relatedwork}}

\begin{thebibliography}{10}

\bibitem{Acquisti}
A.~Acquisti, L.~John, and G.~Loewenstein.
\newblock What is privacy worth?
\newblock In {\em Workshop on Information Systems and Economics}, 2009.

\bibitem{DBLP:journals/firstmonday/AperjisH12}
C.~Aperjis and B.~A. Huberman.
\newblock A market for unbiased private data: Paying individuals according to
  their privacy attitudes.
\newblock {\em First Monday}, 17(5), 2012.

\bibitem{DBLP:journals/pvldb/BalazinskaHS11}
M.~Balazinska, B.~Howe, and D.~Suciu.
\newblock Data markets in the cloud: An opportunity for the database community.
\newblock {\em PVLDB}, 4(12):1482--1485, 2011.

\bibitem{boyd2004convex}
S.~Boyd and L.~Vandenberghe.
\newblock {\em {Convex optimization}}.
\newblock Cambridge University Press, 2004.

\bibitem{nyt}
J.~Brustein.
\newblock Start-ups seek to help users put a price on their personal data.
\newblock The New York Times, Feb 2012.

\bibitem{DBLP:journals/corr/abs-1111-5472}
Y.~Chen, S.~Chong, I.~A. Kash, T.~Moran, and S.~P. Vadhan.
\newblock Truthful mechanisms for agents that value privacy.
\newblock {\em CoRR}, abs/1111.5472, 2011.

\bibitem{DBLP:journals/corr/abs-1111-2885}
P.~Dandekar, N.~Fawaz, and S.~Ioannidis.
\newblock Privacy auctions for inner product disclosures.
\newblock {\em CoRR}, abs/1111.2885, 2011.

\bibitem{danezis:2010}
G.~Danezis and S.~G\"urses.
\newblock A critical review of 10 years of privacy technology.
\newblock In {\em Proceedings of Surveillance Cultures: A Global Surveillance
  Society?}, April 2010.

\bibitem{DBLP:conf/pods/DinurN03}
I.~Dinur and K.~Nissim.
\newblock Revealing information while preserving privacy.
\newblock In {\em PODS}, pages 202--210, 2003.

\bibitem{DBLP:journals/cacm/Dwork11}
C.~Dwork.
\newblock A firm foundation for private data analysis.
\newblock {\em Commun. ACM}, 54(1):86--95, 2011.

\bibitem{DBLP:conf/tcc/DworkMNS06}
C.~Dwork, F.~McSherry, K.~Nissim, and A.~Smith.
\newblock Calibrating noise to sensitivity in private data analysis.
\newblock In {\em TCC}, pages 265--284, 2006.

\bibitem{DBLP:conf/sigecom/FleischerL12}
L.~Fleischer and Y.-H. Lyu.
\newblock Approximately optimal auctions for selling privacy when costs are
  correlated with data.
\newblock In {\em ACM Conference on Electronic Commerce}, pages 568--585, 2012.

\bibitem{DBLP:journals/csur/FungWCY10}
B.~C.~M. Fung, K.~Wang, R.~Chen, and P.~S. Yu.
\newblock Privacy-preserving data publishing: A survey of recent developments.
\newblock {\em ACM Comput. Surv.}, 42(4), 2010.

\bibitem{DBLP:journals/tsmc/GarfinkelLNR06}
R.~S. Garfinkel, R.~D. Gopal, M.~A. Nunez, and D.~Rice.
\newblock Secure electronic markets for private information.
\newblock {\em IEEE Transactions on Systems, Man, and Cybernetics, Part A},
  36(3):461--471, 2006.

\bibitem{DBLP:conf/sigecom/GhoshR11}
A.~Ghosh and A.~Roth.
\newblock Selling privacy at auction.
\newblock In {\em ACM Conference on Electronic Commerce}, pages 199--208, 2011.

\bibitem{DBLP:journals/vldb/Halevy01}
A.~Y. Halevy.
\newblock Answering queries using views: A survey.
\newblock {\em VLDB J.}, 10(4):270--294, 2001.

\bibitem{Kifer08Privacy:}
D.~Kifer, J.~Abowd, J.~Gehrke, and L.~Vilhuber.
\newblock Privacy: Theory meets practice on the map.
\newblock In {\em ICDE}, 2008.

\bibitem{knautz1999nonlinear}
H.~Knautz.
\newblock Nonlinear unbiased estimation in the linear regression model with
  nonnormal disturbances.
\newblock {\em Journal of statistical planning and inference}, 81(2):293--309,
  1999.

\bibitem{DBLP:conf/pods/KoutrisUBHS12}
P.~Koutris, P.~Upadhyaya, M.~Balazinska, B.~Howe, and D.~Suciu.
\newblock Query-based data pricing.
\newblock In {\em PODS}, pages 167--178, 2012.

\bibitem{DBLP:journals/cacm/Laudon96}
K.~C. Laudon.
\newblock Markets and privacy.
\newblock {\em Commun. ACM}, 39(9):92--104, 1996.

\bibitem{lipricing}
C.~Li and G.~Miklau.
\newblock Pricing aggregate queries in a data marketplace.
\newblock In {\em WebDB}, 2012.

\bibitem{DBLP:journals/corr/abs-1202-4741}
K.~Ligett and A.~Roth.
\newblock Take it or leave it: Running a survey when privacy comes at a cost.
\newblock {\em CoRR}, abs/1202.4741, 2012.

\bibitem{DBLP:journals/cacm/McSherry10}
F.~McSherry.
\newblock Privacy integrated queries: an extensible platform for
  privacy-preserving data analysis.
\newblock {\em Commun. ACM}, 53(9):89--97, 2010.

\bibitem{DBLP:journals/tods/NashSV10}
A.~Nash, L.~Segoufin, and V.~Vianu.
\newblock Views and queries: Determinacy and rewriting.
\newblock {\em TODS}, 35(3), 2010.

\bibitem{DBLP:conf/sigecom/NissimOS12}
K.~Nissim, C.~Orlandi, and R.~Smorodinsky.
\newblock Privacy-aware mechanism design.
\newblock In {\em ACM Conference on Electronic Commerce}, pages 774--789, 2012.

\bibitem{Posner81Economics}
R.~Posner.
\newblock The economics of privacy.
\newblock {\em American Economic Review}, 71(2):405--409, 1981.

\bibitem{11Personal}
Personal data: The emergence of a new asset class.
\newblock Report of the World Economic Forum, Feb 2011.

\bibitem{Riederer11sale:}
C.~Riederer, V.~Erramilli, A.~Chaintreau, B.~Krishnamurthy, and P.~Rodriguez.
\newblock For sale: your data: by: you.
\newblock In {\em ACM Workshop on Hot Topics in Networks}, page~13. ACM, 2011.

\bibitem{Roth:2012:BPD:2325713.2325714}
A.~Roth.
\newblock Buying private data at auction: the sensitive surveyor's problem.
\newblock {\em SIGecom Exch.}, 11(1):1--8, June 2012.

\bibitem{DBLP:conf/sigecom/RothS12}
A.~Roth and G.~Schoenebeck.
\newblock Conducting truthful surveys, cheaply.
\newblock In {\em ACM Conference on Electronic Commerce}, pages 826--843, 2012.

\bibitem{Schwartz:1979:LQS:320071.320073}
M.~D. Schwartz, D.~E. Denning, and P.~J. Denning.
\newblock Linear queries in statistical databases.
\newblock {\em ACM Trans. Database Syst.}, 4(2):156--167, June 1979.

\bibitem{Stigler80Introduction}
G.~Stigler.
\newblock An introduction to privacy in economics and politics.
\newblock {\em Journal of Legal Studies}, 9(4):623--644, 1980.

\end{thebibliography}

\newpage

\appendix
\label{sec:appendix}

\section{Proof of \autoref{COR:FINITE}}
By \autoref{lemma:concave} it suffices to check that all first
derivatives are $\geq 0$ and all second derivatives are $\leq 0$, for
all $x \geq 0$:
\begin{align*}
&\frac{d}{dx}\texttt{atan}(x)=\frac{1}{1+x^2} > 0;\\
&\frac{d^2}{dx^2}\texttt{atan}(x)=-\frac{2x}{(1+x^2)^2} \leq 0;\\
&\frac{d}{dx}\texttt{tanh}(x)=\frac{1}{\texttt{cosh}^2(x)} > 0;\\
&\frac{d^2}{dx^2}\texttt{tanh}(x)=-\frac{2\texttt{tanh}(x)}{\texttt{cosh}^2(x)} \leq 0;\\
&\frac{d}{dx}\frac{x}{\sqrt{1+x^2}}=(1+x^2)^{-\frac{3}{2}} > 0;\\
&\frac{d^2}{dx^2}\frac{x}{\sqrt{1+x^2}}=-3x(1+x^2)^{-\frac{5}{2}} \leq 0.\\
&\end{align*}

\section{Proof of \autoref{PROP:GENERAL}}


We show that each $\mu_i$ is fair in expectation.  For individual
$i$, if $q_i = 0$, then by definition, $s_i(\mathcal{K}) = 0$ and
$s_i(\mathcal{K}') = 0$, and thus
$$\EE{\mu_i(\mathbf{q}, v)} = (\frac{s_i(\mathcal{K})}{b} + \frac{s_i(\mathcal{K}')}{b'}) \times c_i = 0$$.

We show that $\mu_i$ is micro arbitrage-free in expectation. For each
individual $i$, by definition,
\begin{align*}
\EE{\mu_i(\bbb Q)} & = \frac{\gamma b' \cdot c_i \cdot |q_i|}{b \cdot (b' - \delta)}\\
& = \frac{\sqrt{2}\gamma b' \cdot c_i}{b' - \delta}  \frac{|q_i|}{\sqrt{v}}.
\end{align*}

By the same argument as in \autoref{prop:basic}, $\EE{\mu_i(\bbb Q)}$
is arbitrage-free, and thus $\mu_i(\bbb Q)$ is arbitrage-free in
expectation.

We show that the micro-payments are cost recovering. By
definition,
\begin{align*}
\sum_i \mu_i(\bbb Q) &= \sum_i (\frac{s_i(\mathcal{K})}{b} + \frac{s_i(\mathcal{K}')}{b'}) \times c_i + \rho' \\
& = \sum_i (\frac{ \gamma |q_i|}{b} + \frac{\frac{\gamma b'|q_i|
    \delta}{b\cdot (b' - \delta)})}{b'}) \times c_i + \rho' \\
& = \frac{\gamma b'}{b\cdot (b' - \delta)} \sum_i c_i\cdot |q_i| + \rho' \\
& = \pi(\bbb Q),
\end{align*}
proving the claim.

Finally, we show that $\mu_i$ is compensating, in expectation: For
each individual $i$,
\begin{align*}
  \EE{\mu_i(\bbb Q)} & = (\frac{s_i(\mathcal{K})}{b} + \frac{s_i(\mathcal{K}')}{b'}) \times c_i\\
  & \ge (\varepsilon_i(\mathcal{K}) + \epsilon_i(\mathcal{K}')
  \times c_i,
\end{align*}
meaning that $\mu_i(\bbb Q)$ compensate user $i$ for her loss of privacy in expectation.

By a similar argument as in \autoref{prop:mu:sum}, $\pi(\bbb Q)$ is arbitrage-free in expectation.

\end{document}